\makeatletter\@addtoreset {equation}{section}\makeatother
\newtheorem{theorem}{Theorem}
\newtheorem{proposition}{Proposition}
\begin{document}

\title{\bf Rogue periodic waves of the mKdV equation}

\author{Jinbing Chen$^{1,2}$ and Dmitry E. Pelinovsky$^{2,3}$ \\
{\small \it $^1$ School of Mathematics, Southeast University, Nanjing, Jiangsu 210096, P.R. China} \\
{\small \it $^2$ Department of Mathematics, McMaster University, Hamilton, Ontario, Canada, L8S 4K1 } \\
{\small \it $^3$ Department of Applied Mathematics,
Nizhny Novgorod State Technical University} \\
{\small \it 24 Minin street, 603950 Nizhny Novgorod, Russia} }

\date{\today}
\maketitle

\begin{abstract}
Rogue periodic waves stand for rogue waves on the periodic background.
Two families of traveling periodic waves of the modified Korteweg--de Vries (mKdV) equation
in the focusing case are expressed by the Jacobian elliptic functions {\em dn} and {\em cn}.
By using one-fold and two-fold Darboux transformations, we construct explicitly
the rogue periodic waves of the mKdV equation. Since the {\em dn}-periodic wave
is modulationally stable with respect to long-wave perturbations, the
``rogue" {\em dn}-periodic solution is not a proper rogue wave on the periodic background
but rather a nonlinear superposition of an algebraically decaying soliton and
the {\em dn}-periodic wave. On the other hand, since the {\em cn}-periodic wave
is modulationally unstable with respect to long-wave perturbations, the
rogue {\em cn}-periodic wave is a proper rogue wave on the periodic background,
which generalizes the classical rogue wave (the so-called Peregrine's breather) of
the nonlinear Schr\"{o}dinger (NLS) equation. We compute the magnification factor
for the rogue {\em cn}-periodic wave of the mKdV equation and show
that it remains  for all amplitudes the same as in the small-amplitude NLS limit.
As a by-product of our work, we find explicit expressions for the periodic eigenfunctions
of the AKNS spectral problem associated with the {\em dn} and {\em cn} periodic waves of the mKdV equation.
\end{abstract}

\section{Introduction}

Simplest models for nonlinear waves in fluids such as the nonlinear Schr\"{o}dinger equation (NLS),
the Korteweg--de Vries equation (KdV), and the modified Korteweg--de Vries equation (mKdV) have many things in common.
First, they appear to be integrable by using the inverse scattering transform method for the same
AKNS (Ablowitz--Kaup--Newell--Segur) spectral problem \cite{AKNS}. Second, there exist asymptotic transformations
of one nonlinear evolution equation to another nonlinear evolution equation, e.g. from defocusing NLS to KdV
and from KdV and focusing mKdV to the defocusing and focusing NLS respectively \cite{KZ}.

Modulation instability of the constant-wave background in the focusing NLS equation has been a paramount concept
in the modern nonlinear physics \cite{OZ}. More recently, spectral instability of the periodic waves expressed by
the elliptic functions {\em dn} and {\em cn} has been investigated in the focusing NLS \cite{DS}
(see also \cite{LeCoz,Lafortune}). Regarding periodic waves in the focusing mKdV equation,
it was found that the {\em dn}-periodic waves are modulationally stable with respect to the long-wave
perturbations, whereas the {\em cn}-periodic waves are modulationally unstable \cite{BJK,BHJ} (see also \cite{DN}).

The outcome of the modulation instability in the focusing NLS equation
is the emergence of the localized  spatially-temporal patterns on the background of the unstable periodic
or quasi-periodic waves (see review in \cite{Calini}). Such spatially-temporal patterns are known under
the generic name of {\em rogue waves} \cite{Taki}.

In the simplest setting of the constant wave background, the rogue waves are expressed as rational solutions of the
NLS equation. Explicit expressions for such rational solutions have been obtained by using
available algebraic constructions such as applications of the multi-fold Darboux transformations \cite{Akh,Matveev,JYang}.
For example, if the focusing NLS equation is set in the form
\begin{equation}
i \psi_t + \psi_{xx} + 2 (|\psi|^2 - 1) \psi = 0,
\label{nls}
\end{equation}
then the {\em classical rogue wave} up to the translations in $(x,t)$ is given by
\begin{equation}
\label{rogue-basic}
\psi(x,t) = 1 - \frac{4 (1+4it)}{1 + 4 x^2 + 16 t^2}.
\end{equation}
As $|t| + |x| \to \infty$, the rogue wave (\ref{rogue-basic}) approaches the constant wave
background $\psi_0(x,t) = 1$. On the other hand, at $(x,t) = (0,0)$, the rogue wave reaches the
maximum at $|\psi(0,0)| = 3$, from which we define the {\em magnification factor} of the
constant wave background to be $M_0 = 3$. The rogue wave (\ref{rogue-basic}) was derived by
Peregrine \cite{Peregrine} as an outcome of the modulational instability of the constant-wave
background and is sometimes referred to as {\em Peregrine's breather}.

Rogue waves over nonconstant backgrounds (e.g., the periodic waves or the two-phase solutions)
were addressed only recently in the context of the focusing NLS equation (\ref{nls}).
Computations of such rogue waves rely on the numerical implementation of
the B\"{a}cklund transformation to the periodic waves \cite{Kedziora} or
the two-phase solutions \cite{CalSch} of the NLS. Further analytical work to characterize
the general two-phase solutions of the NLS can be found in \cite{Wright} and in \cite{Tovbis1,Tovbis2}.

The purpose of this work is to obtain exact solutions for the rogue waves on the periodic background,
which we name here as {\em rogue periodic waves}. Computations of such rogue waves are developed
by an analytical algorithm with precise characterization of the periodic and non-periodic
eigenfunctions of the AKNS spectral problem at the periodic wave. Although our computations
are reported in the context of the focusing mKdV equation, the algorithm can be applied
to other nonlinear evolution equations associated to the AKNS spectral problem such as the NLS.

Hence we consider the focusing mKdV equation written in the normalized form
\begin{equation}\label{mKdV}
u_t+6u^2u_x+u_{xxx}=0.
\end{equation}
Some particular rational and trigonometric solutions of the mKdV were recently constructed in
\cite{Choudhury} and discussed in connection to rogue waves of the NLS. In comparison with \cite{Choudhury},
the novelty of our work is to obtain the rogue periodic waves expressed by
the Jacobian elliptic functions and to investigate how these rogue periodic waves generalize
in the small-amplitude limit the classical rogue wave (\ref{rogue-basic}).
In particular, we shall compute explicitly the magnification factor for the rogue periodic waves
that depends on elliptic modulus of the Jacobian elliptic functions.

There are two particular periodic wave solutions of the mKdV. One solution is strictly positive
and is given by the {\em dn} elliptic function. The other solution is sign-indefinite and
is given by the {\em cn} elliptic function. Up to the translations in $(x,t)$ as well as
a scaling transformation, the positive solution is given by
\begin{equation}
u_{\rm dn}(x,t) = {\rm dn}(x-ct;k), \quad c = c_{\rm dn}(k) := 2 - k^2,
\label{elliptic-dn}
\end{equation}
whereas the sign-indefinite solution is given by
\begin{equation}
u_{\rm cn}(x,t) = k {\rm cn}(x-ct;k), \quad c = c_{\rm cn}(k) := 2k^2 - 1.
\label{elliptic-cn}
\end{equation}
In both cases, $k \in (0,1)$ is elliptic modulus, which defines
two different asymptotic limits.
In the limit $k \to 0$, we obtain
\begin{equation}
\label{elliptic-dn-small}
u_{\rm dn}(x,t) \sim 1 - \frac{1}{2} k^2\sin^2(x - 2t)
\end{equation}
and
\begin{equation}
\label{elliptic-cn-small}
u_{\rm cn}(x,t) \sim k \cos(x + t)
\end{equation}
which are understood in the sense of the Stokes expansion of the periodic waves.
As is well-known \cite{PelPel,Parkes}, the mKdV equation can be reduced asymptotically to the NLS equation
in the small-amplitude limit. The {\em cn}-periodic wave of the mKdV in the limit $k \to 0$
is reduced to the constant wave background $\psi_0$ of the NLS equation (\ref{nls}),
which is modulationally unstable with respect to the long-wave perturbations.
Hence, the {\em cn}-periodic wave for the mKdV generalizes the constant wave background of the NLS
and inherits modulational instability with respect to long-wave perturbations.
The {\em dn}-periodic wave has nonzero mean value, which is large enough to make the {\em dn}-periodic
wave modulationally stable with respect to long-wave perturbations \cite{PelPel}.

In the limit $k \to 1$, both Jacobian elliptic functions (\ref{elliptic-dn}) and (\ref{elliptic-cn})
converges to the normalized mKdV soliton
\begin{equation}
\label{elliptic-dn-cn-large}
u_{\rm dn}(x,t), u_{\rm cn}(x,t) \to u_{\rm soliton}(x,t) = {\rm sech}(x-t).
\end{equation}
Very recently, the rogue waves of the mKdV built from a superposition of slowly interacting nearly identical
solitons were considered numerically \cite{PelSh} and analytically \cite{PelSl}. It was found
in these studies that the magnification factor of the rogue waves built from $N$ nearly identical solitons is
exactly $N$.

In our work, we compute the rogue periodic waves for the {\em dn} and {\em cn} Jacobian elliptic functions
with the following algorithm. First, by using the algebraic technique based on the nonlinearization of the Lax pair \cite{Cao1},
we obtain the explicit expressions for the eigenvalues $\lambda$ with ${\rm Re}(\lambda) > 0$
and the associated periodic eigenfunctions in the AKNS spectral problem associated with the Jacobian elliptic functions.
These eigenvalues correspond to the branch points of the continuous bands, when the AKNS spectral problem with the periodic potentials is
considered on the real line with the help of the Floquet--Bloch transform \cite{Calini}. For each periodic eigenfunction,
we construct the second, linearly independent solution of the AKNS spectral problem, which is not periodic but
linearly growing in $(x,t)$. The latter eigenfunction is expressed in terms of integrals of the Jacobian elliptic functions
and hence it is not explicit. Finally, by using the one-fold and two-fold Darboux transformations \cite{GuHuZhou}
with the nonperiodic eigenfunctions of the AKNS spectral problem, we obtain the rogue periodic waves.
Although the resulting solutions are not explicit, we prove that these solutions approach
the {\em dn} and {\em cn}  periodic waves as $|x|+|t| \to \infty$ almost everywhere
and that they have maximum at the origin $(x,t) = (0,0)$, where the magnification factors can be computed
in the explicit form.

Figure \ref{fig-1} shows the ``rogue" {\em dn}-periodic wave for $k = 0.5$ (left) and $k = 0.99$ (right).
We write the name of ``rogue wave" in commas, because the solution is not a proper rogue wave, the latter one
is supposed to appear from nowhere and to disappear without a trace as time evolves \cite{Taki}. Instead, we obtain a solution
that corresponds to a nonlinear superposition of the algebraically decaying soliton of the mKdV \cite{GrimshawPel} and
the {\em dn}-periodic wave, hence the maximal amplitude is brought by the algebraic soliton from infinity.
This outcome of our algorithm is related to the fact that
the {\em dn}-periodic wave in the mKdV is modulationally stable with respect to
the long-wave perturbations \cite{BHJ}. Indeed, it is argued in \cite{Wabnitz} on several examples
involving the constant wave background that the rogue wave solutions exist only in the parameter regions
where the constant wave background is modulationally unstable.

Figure \ref{fig-2} shows the rogue {\em cn}-periodic wave for $k = 0.5$ (left) and $k = 0.99$ (right).
This solution is a proper rogue wave on the periodic background because it appears from nowhere and disappears
without a trace as time evolves. The existence of such rogue periodic wave is related to the fact
that the {\em cn}-periodic wave in the mKdV is modulationally unstable with respect to
the long-wave perturbations \cite{BHJ}.

\begin{figure}[htb]
\begin{center}
\includegraphics[height=4cm]{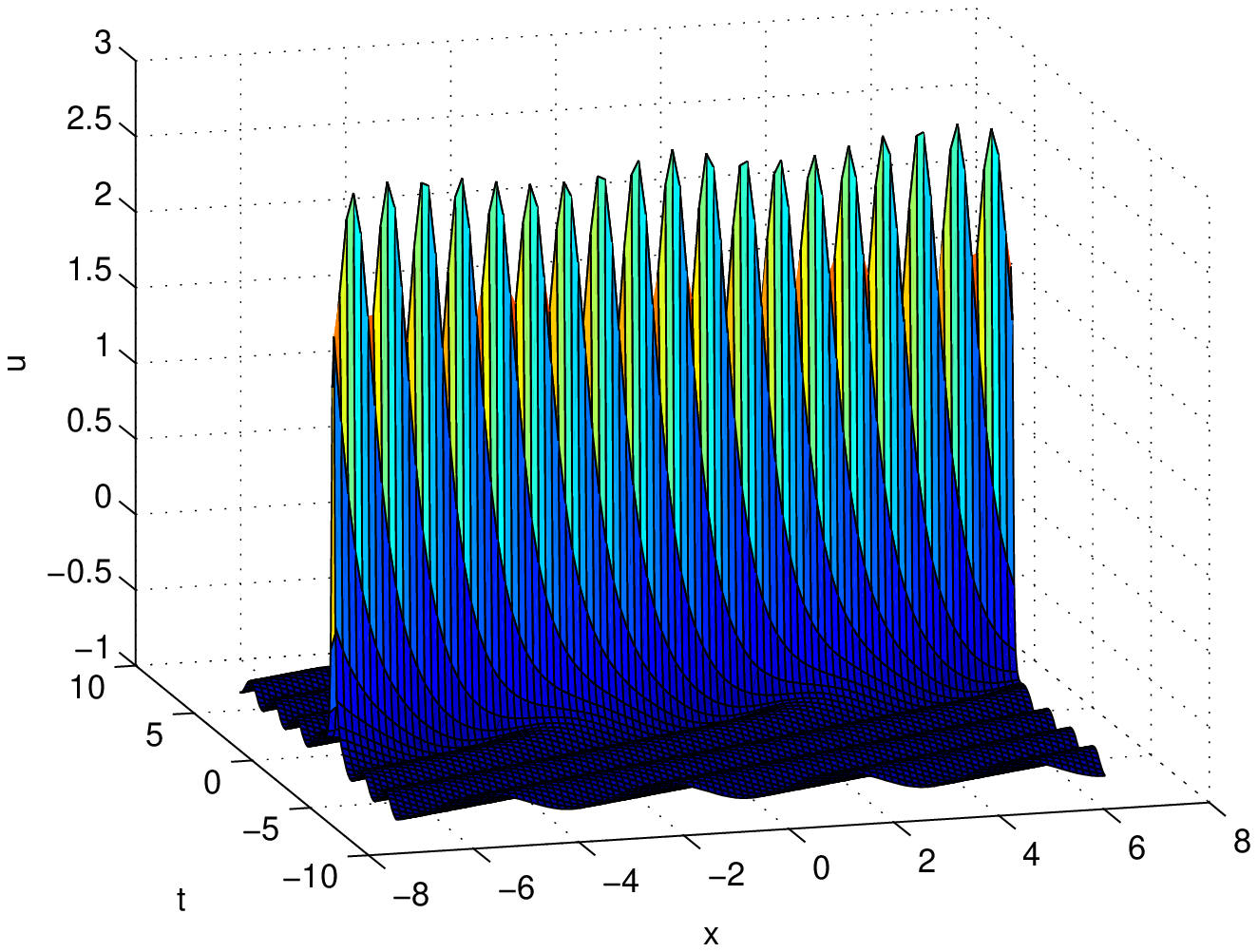}
\includegraphics[height=4cm]{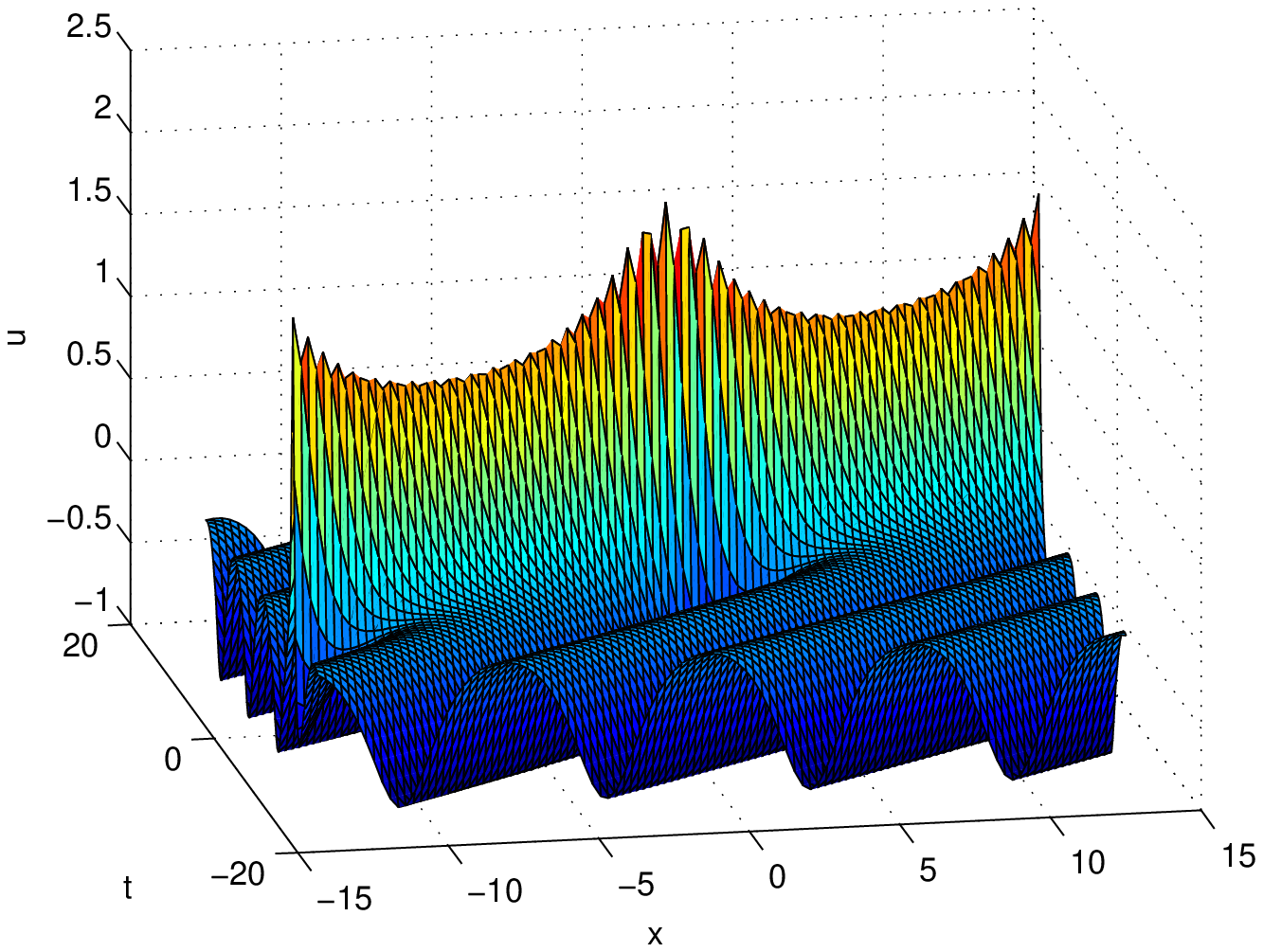}
\end{center}
\caption{The ``rogue" {\em dn}-periodic wave of the mKdV for $k = 0.5$ (left) and $k = 0.99$ (right). }
\label{fig-1}
\end{figure}

\begin{figure}[htb]
\begin{center}
\includegraphics[height=4cm]{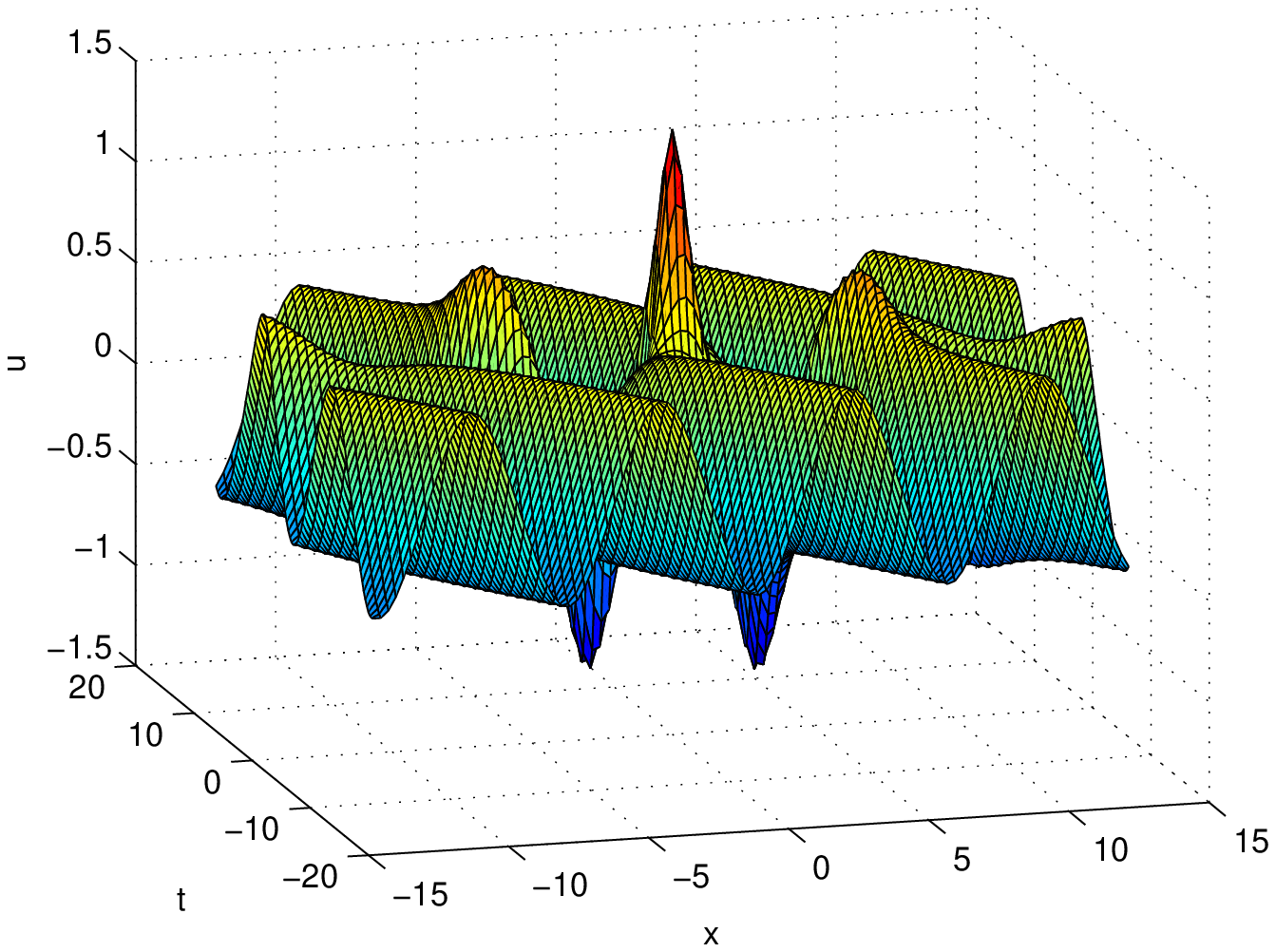}
\includegraphics[height=4cm]{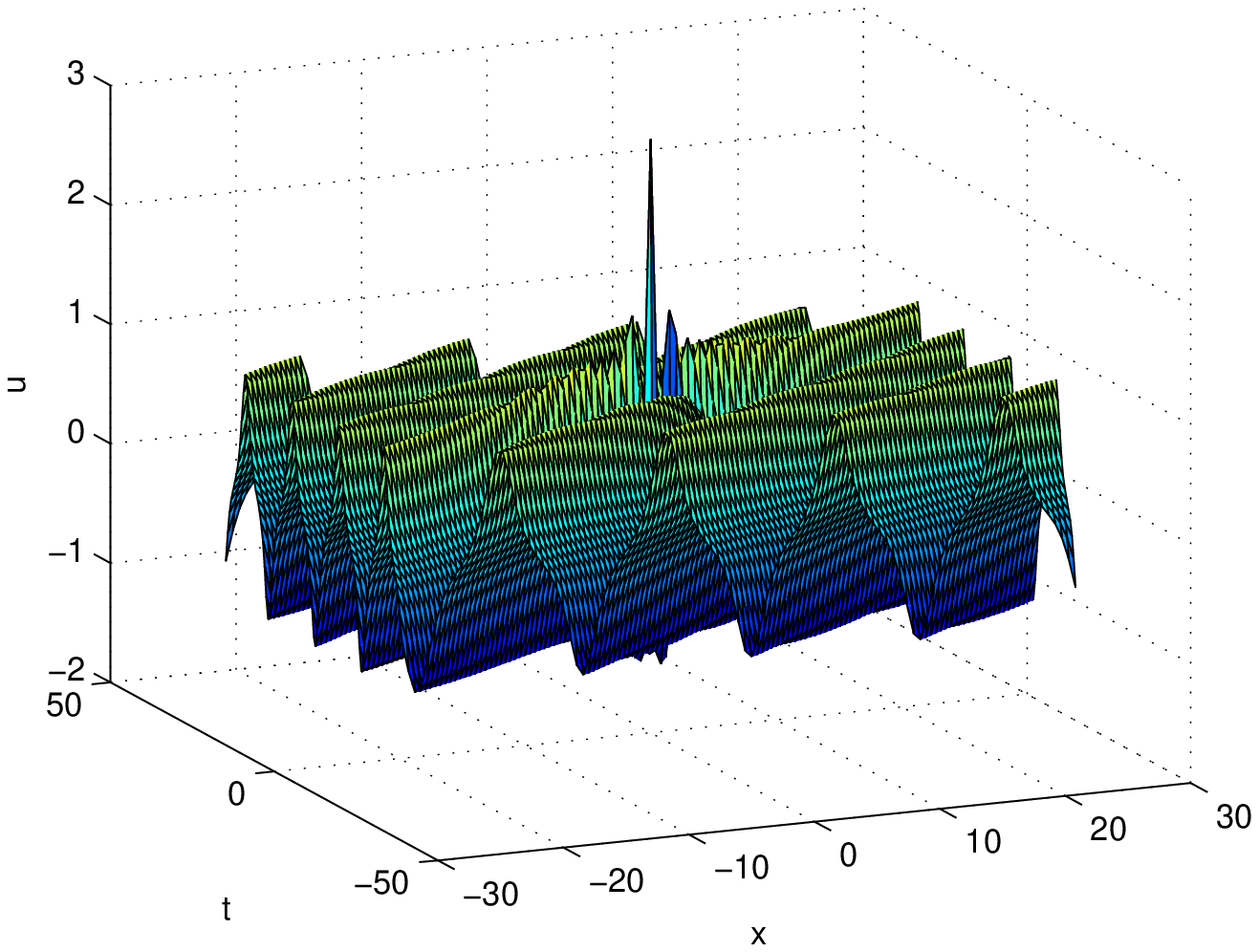}
\end{center}
\caption{The rogue {\em cn}-periodic wave of the mKdV for $k = 0.5$ (left) and $k = 0.99$ (right). }
\label{fig-2}
\end{figure}

The magnification factors of the rogue periodic waves can be computed in the explicit form:
\begin{equation}
\label{magnification-factor}
M_{\rm dn}(k) = 2 + \sqrt{1 - k^2}, \quad M_{\rm cn}(k) = 3, \quad k \in [0,1].
\end{equation}
It is remarkable that the magnification factor $M_{\rm cn}(k) = 3$ is independent of the wave amplitude
in agreement with $M_0 = 3$ for the classical rogue wave (\ref{rogue-basic})
thanks to the small-amplitude asymptotic limit (\ref{elliptic-cn-small}). At the same time
$M_{\rm dn}(k) \in [2,3]$ and $M_{\rm dn}(k) \to 3$ as $k \to 0$ due to the fact
that the limit (\ref{elliptic-dn-small}) gives the same potential to the AKNS spectral problem
as the constant wave background $\psi_0(x,t) = 1$ of the NLS equation (\ref{nls}).

In the soliton limit (\ref{elliptic-dn-cn-large}), $M_{\rm dn}(k) \to 2$ as $k \to 1$
in agreement with the recent results in \cite{PelSh,PelSl}. Indeed, the ``rogue" {\em dn}-periodic wave
degenerates as $k \to 1$ to the two-soliton solutions constructed of two nearly identical solitons. Such solutions
are constructed by the one-fold Darboux transformation from the one-soliton solutions,
when the eigenfunction of the AKNS spectral problem is nondecaying (exponentially growing) \cite{MizPel}.
Therefore, the magnification factor $M_{\rm dn}(1) = 2$ is explained by the weak interaction between
two nearly identical solitons. On the other hand, $M_{\rm cn}(1) = 3$ is explained by the fact that the
rogue {\em cn}-periodic wave is built from the two-fold Darboux transformation,
hence it degenerates as $k \to 1$ to the three-soliton solutions constructed
of three nearly identical solitons \cite{PelSl}.

The paper is organized as follows. Section 2 gives details of the periodic
eigenfunctions of the AKNS spectral problem associated with the {\em dn} and {\em cn} Jacobian elliptic functions.
The non-periodic functions of the AKNS spectral problem are computed in Section 3.
Section 4 presents the general $N$-fold Darboux transformation for the mKdV equation and the explicit
formulas for the one-fold and two-fold Darboux transformations. The rogue {\em dn}-periodic
and {\em cn}-periodic waves of the mKdV are constructed in Sections 5 and 6 respectively.
Appendix A gives a proof of the $N$-fold Darboux transformations in the explicit form.

\vspace{0.25cm}

{\bf Acknowledgement.} The authors thank E.N. Pelinovsky for suggesting the problem of computing the rogue
periodic waves in the mKdV. J. Chen is grateful to the Department of Mathematics of McMaster University
for the generous hospitality during his visit. J. Chen is supported by the National Natural Science Foundation
of China (No. 11471072) and the Jiangsu  Overseas Research $\&$ Training Programme for University Prominent Young
$\&$ Middle-aged Teachers and Presidents (No. 1160690028). D.E. Pelinovsky is supported by
the state task of Russian Federation in the sphere of scientific activity (Task No. 5.5176.2017/8.9).

\section{Periodic eigenfunctions of the AKNS spectral problem}

The mKdV equation (\ref{mKdV}) is obtained as a compatibility condition of the following Lax
pair of two linear equations for the vector $\varphi = (\varphi_1,\varphi_2)^t$:
\begin{equation}\label{3.2}
\varphi_x = U(\lambda,u) \varphi, \quad
U(\lambda,u) =\left(\begin{array}{cc} \lambda&u\\ -u&-\lambda\\ \end{array}\right),
\end{equation}
and
\begin{equation}\label{3.3}
\varphi_t = V(\lambda,u) \varphi, \quad
V(\lambda,u) = \left(\begin{array}{cc}
-4\lambda^3-2\lambda u^2&-4\lambda^2u-2\lambda u_x-2u^3-u_{xx}\\
4\lambda^2u-2\lambda u_x+2u^3+u_{xx}&4\lambda^3+2\lambda u^2\\
\end{array}\right).
\end{equation}
The first linear equation (\ref{3.2}) is referred to as the AKNS spectral problem as
it defines the spectral parameter $\lambda$ for a given potential $u(x,t)$ at a frozen time $t$, e.g. at $t = 0$.
By using the Pauli matrices
$$
\sigma_1 = \left( \begin{array}{cc} 0 & 1 \\ 1 & 0 \end{array} \right), \quad
\sigma_2 = \left( \begin{array}{cc} 0 & -i \\ i & 0 \end{array} \right), \quad
\sigma_3 = \left( \begin{array}{cc} 1 & 0 \\ 0 & -1 \end{array} \right),
$$
we can rewrite $U(\lambda,u)$ and $V(\lambda,u)$ in (\ref{3.2}) and (\ref{3.3}) in the form
\begin{eqnarray}
\label{operator-U}
U(\lambda,u) & = & \lambda \sigma_3 + u \sigma_3 \sigma_1, \\
\label{operator-V}
V(\lambda,u) & = & -(4 \lambda^3 + 2 \lambda u^2) \sigma_3 - 4 \lambda^2 u \sigma_3 \sigma_1 - 2 \lambda u_x \sigma_1 - (2 u^3 + u_{xx}) \sigma_3 \sigma_1.
\end{eqnarray}

If $u$ is either {\em dn} or {\em cn} Jacobian elliptic functions (\ref{elliptic-dn}) and (\ref{elliptic-cn}),
the potentials are $L$-periodic in $x$ with the period $L = 2K(k)$ for {\em dn}-functions and $L = 4 K(k)$ for {\em cn}-functions,
where $K(k)$ is the complete elliptic integral. If the AKNS spectral problem (\ref{3.2}) is considered in the space of
$L$-periodic functions, then the admissible set for the spectral parameter $\lambda$ is discrete as
the AKNS spectral problem has a purely point spectrum.

In the case of periodic or quasi-periodic potentials $u$, the algebraic technique based on the nonlinearization
of the Lax pair \cite{Cao1} (see also applications in \cite{Cao2,Cao4,Chen,Cao3})
can be used to obtain explicit solutions for the eigenfunctions of the AKNS spectral problem
related to the particular eigenvalues $\lambda$ with ${\rm Re}(\lambda) > 0$.
Below we simplify the general method in order to obtain particular solutions of the AKNS spectral problem for the
periodic waves in the focusing mKdV equation (\ref{mKdV}).
The following two propositions represent the explicit expressions for eigenvalues and periodic eigenfunctions of system
(\ref{3.2}) and (\ref{3.3}) related to the travelling periodic wave solution of the mKdV.

\begin{proposition}
\label{prop-AKNS}
Let $u$ be a travelling wave solution of the mKdV equation (\ref{mKdV}) satisfying
\begin{equation}
\label{ode}
\frac{d^2 u}{d x^2} + 2 u^3 = c u, \quad \left(\frac{du}{dx} \right)^2 + u^4 = c u^2 + d,
\end{equation}
where $c$ and $d$ are real constants parameterized by
\begin{equation}
\label{ode-parameters}
c = 4 \lambda_1^2 + 2 E_0, \quad d = - E_0^2
\end{equation}
with possibly complex $\lambda_1$ and $E_0$. Then, there exists a solution $\varphi = (\varphi_1,\varphi_2)^t$
of the AKNS spectral problem (\ref{3.2}) with $\lambda = \lambda_1$ such that
\begin{equation}
\label{AKNS-solution}
\varphi_1^2 + \varphi_2^2 = u, \quad \varphi_1^2 - \varphi_2^2 = \frac{1}{2\lambda_1} \frac{du}{dx},
\quad 4 \lambda_1 \varphi_1 \varphi_2 = E_0 - u^2.
\end{equation}
In particular, if $u$ is periodic in $x$, then $\varphi$ is periodic in $x$.
\end{proposition}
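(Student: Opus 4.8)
The plan is to pass to the squared eigenfunctions, in the spirit of the nonlinearization of the Lax pair \cite{Cao1}. If $\varphi=(\varphi_1,\varphi_2)^t$ solves (\ref{3.2}) with $\lambda=\lambda_1$, introduce
\begin{equation*}
f:=\varphi_1^2+\varphi_2^2,\qquad g:=\varphi_1^2-\varphi_2^2,\qquad h:=2\varphi_1\varphi_2 .
\end{equation*}
Differentiating and using $\varphi_1'=\lambda_1\varphi_1+u\varphi_2$, $\varphi_2'=-u\varphi_1-\lambda_1\varphi_2$ one obtains the closed linear system
\begin{equation*}
f'=2\lambda_1 g,\qquad g'=2\lambda_1 f+2uh,\qquad h'=-2ug ,
\end{equation*}
together with the first integral $f^2-g^2-h^2=0$ (its $x$-derivative along the system vanishes, and its value is $4\varphi_1^2\varphi_2^2-h^2=0$). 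Thus the three identities (\ref{AKNS-solution}) amount precisely to the statement that $(f,g,h)=(u,\ \tfrac{1}{2\lambda_1}u',\ \tfrac{1}{2\lambda_1}(E_0-u^2))$, so it is enough to produce a solution of the linear system with these values and then to recover $\varphi$ from it.

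For the recovery I would run the correspondence backwards. Suppose $(f,g,h)$ solves the linear system and $f^2=g^2+h^2$ (it suffices to check this at one point, by the first integral). Choose a base point $x_0$ with $u(x_0)\neq 0$ and set $A:=\tfrac12(f+g)(x_0)$, $B:=\tfrac12(f-g)(x_0)$; then $4AB=(f^2-g^2)(x_0)=h(x_0)^2$, so one may pick $\varphi(x_0)\in\mathbb{C}^2$ with $\varphi_1(x_0)^2=A$, $\varphi_2(x_0)^2=B$, $2\varphi_1(x_0)\varphi_2(x_0)=h(x_0)$. Let $\varphi$ be the solution of (\ref{3.2}) with this initial value. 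Its squared quantities $(\varphi_1^2+\varphi_2^2,\varphi_1^2-\varphi_2^2,2\varphi_1\varphi_2)$ satisfy the same linear system and coincide with $(f,g,h)$ at $x_0$, hence everywhere by uniqueness for linear ODEs; this $\varphi$ realizes (\ref{AKNS-solution}).

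It then remains to check that $(f,g,h):=(u,\ \tfrac{1}{2\lambda_1}u',\ \tfrac{1}{2\lambda_1}(E_0-u^2))$ solves the linear system together with $f^2=g^2+h^2$ exactly when $u$ obeys (\ref{ode})--(\ref{ode-parameters}) (and $\lambda_1\neq 0$). The equations $f'=2\lambda_1 g$ and $h'=-2ug$ hold identically from the definitions; $g'=2\lambda_1 f+2uh$ becomes, after multiplication by $2\lambda_1$, the identity $u''=(4\lambda_1^2+2E_0)u-2u^3=cu-2u^3$, which is the first ODE in (\ref{ode}) with $c$ as in (\ref{ode-parameters}); and $f^2-g^2-h^2=0$ reads $4\lambda_1^2 u^2=(u')^2+(E_0-u^2)^2$, i.e. $(u')^2+u^4=(4\lambda_1^2+2E_0)u^2-E_0^2=cu^2+d$, which is the second ODE in (\ref{ode}) with $c,d$ as in (\ref{ode-parameters}). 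Combined with the previous paragraph, this proves the existence statement.

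For the last assertion, if $u$ is $L$-periodic then so are $u'$, $u^2$, hence so is the triple $(f,g,h)$; consequently $x\mapsto\varphi(x+L)$ solves (\ref{3.2}) and has the same squared quantities as $\varphi$, which forces $\varphi(x+L)=\pm\varphi(x)$ with a sign that is constant in $x$ (equivalently the monodromy matrix equals $\pm I$, since $u$ nonconstant makes $\{\varphi(x)\}$ span $\mathbb{C}^2$). The one genuinely delicate point — and the only step I expect to be non-routine — is ruling out the anti-periodic sign, so that $\varphi$ has exactly the period $L$ of $u$ rather than $2L$; I would settle this from the explicit $dn/cn$ expressions obtained in Section 2 (or by a parity argument at the band edge $\lambda=\lambda_1$). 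Everything else reduces to the squared-eigenfunction bookkeeping above.
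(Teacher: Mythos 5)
Your proof is correct and reaches the proposition by a genuinely more direct route than the paper. The paper follows the nonlinearization of the Lax pair: it postulates the Hamiltonian system (\ref{Ham-sys}) with Hamiltonian (\ref{Ham-fun}), notes that the relations (\ref{AKNS-solution}) hold by construction, and then passes through the Lax matrix $W(\lambda)$ and the auxiliary variable $\mu$ to derive the second-order constraint (\ref{u-representation}), which is then matched against (\ref{ode}) with parameters (\ref{ode-parameters}). You instead work with the closed linear system for the squared eigenfunctions $(f,g,h)=(\varphi_1^2+\varphi_2^2,\varphi_1^2-\varphi_2^2,2\varphi_1\varphi_2)$ and its quadratic first integral $f^2-g^2-h^2$, verify that the candidate triple $\bigl(u,\tfrac{1}{2\lambda_1}u',\tfrac{1}{2\lambda_1}(E_0-u^2)\bigr)$ satisfies this system precisely when (\ref{ode})--(\ref{ode-parameters}) hold, and then recover $\varphi$ by choosing compatible square roots at a base point and invoking uniqueness for the linear problem (\ref{3.2}). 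This avoids the $W(\lambda)$ and $\mu$ machinery, it settles the existence direction ``given $u$, construct $\varphi$'' explicitly (the paper argues essentially in the opposite direction, from a trajectory of the Hamiltonian flow to the ODE for $u$, and leaves the recovery of $\varphi$ implicit), and it also addresses the periodicity claim, which the paper's proof does not discuss at all.

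Two remarks on your final paragraph. The parenthetical ``equivalently the monodromy matrix equals $\pm I$'' is wrong: $\varphi(x+L)=\epsilon\varphi(x)$ only says that $\varphi(0)$ is an eigenvector of the monodromy matrix with eigenvalue $\epsilon$; in fact the monodromy at $\lambda=\lambda_1$ is a nontrivial Jordan block, since the second solution constructed in Proposition \ref{prop-Second-solution} grows linearly in $x$. The sign constancy you need follows more simply from continuity: $\varphi(x)\neq 0$ for all $x$ (otherwise $\varphi\equiv 0$, contradicting $\varphi_1^2+\varphi_2^2=u$), so $\epsilon(x)=\pm 1$ is locally a ratio of continuous nonvanishing functions and hence constant. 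Moreover, the step you flag as delicate --- ruling out the anti-periodic sign --- is not needed for the statement as written: the proposition asserts only that $\varphi$ is periodic, and an $L$-antiperiodic $\varphi$ is $2L$-periodic, so no appeal to the explicit dn/cn formulas of Section 2 is required.
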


\begin{proof}
Following \cite{Cao1}, we set $u = \varphi_1^2 + \varphi_2^2$ and consider a nonlinearization
of the AKNS spectral problem (\ref{3.2}) given by the Hamiltonian system
\begin{eqnarray}
\label{Ham-sys}
\frac{d \varphi_1}{dx} = \frac{\partial H}{\partial \varphi_2}, \quad
\frac{d \varphi_2}{dx} = -\frac{\partial H}{\partial \varphi_1},
\end{eqnarray}
which is related to the Hamiltonian function
\begin{equation}
\label{Ham-fun}
H(\varphi_1,\varphi_2) = \frac{1}{4} (\varphi_1^2+\varphi_2^2)^2 + \lambda_1 \varphi_1 \varphi_2 = \frac{1}{4} E_0,
\end{equation}
where $E_0$ is constant in $x$. It follows from (\ref{Ham-fun})
that $4 \lambda_1 \varphi_1 \varphi_2 = E_0 - u^2$. Also note that
$$
\frac{du}{dx} = 2 \left( \varphi_1 \frac{d \varphi_1}{d x} + \varphi_2 \frac{d \varphi_2}{dx} \right)
= 2 \lambda_1 (\varphi_1^2 - \varphi_2^2),
$$
so that all three equations in (\ref{AKNS-solution}) are satisfied by the construction.

Let us introduce
$$
Q(\lambda) = \left(\begin{array}{cc}
\lambda & \varphi_1^2 + \varphi_2^2\\
-\varphi_1^2 - \varphi_2^2 &-\lambda
\end{array}\right), \quad
W(\lambda) = \left(\begin{array}{cc}
W_{11}(\lambda) & W_{12}(\lambda)\\
W_{12}(-\lambda) & -W_{11}(-\lambda) \end{array}\right),
$$
with
\begin{eqnarray*}
W_{11}(\lambda) & = & 1 - \frac{\varphi_1 \varphi_2}{\lambda - \lambda_1} + \frac{\varphi_1 \varphi_2}{\lambda + \lambda_1}, \\
W_{12}(\lambda) & = & \frac{\varphi_1^2}{\lambda-\lambda_1} + \frac{\varphi_2^2}{\lambda + \lambda_1}.
\end{eqnarray*}
Then, one can check directly that the Lax equation
$$
\frac{d}{dx} W(\lambda) = Q(\lambda) W(\lambda) - W(\lambda) Q(\lambda),
$$
is satisfied for every $\lambda \in \mathbb{C}$ if and only if
$(\varphi_1,\varphi_2)$ satisfies (\ref{Ham-sys}). In particular, the $(1,2)$-entry in the above relations yields the equation
\begin{equation}
\label{ham-1-2}
\frac{d}{dx} W_{12}(\lambda) = 2 \lambda W_{12}(\lambda) - 2 (\varphi_1^2+\varphi_2^2) W_{11}(\lambda)
\end{equation}
and the representation
\begin{equation}
\label{ham-repr}
W_{12}(\lambda) = \frac{1}{\lambda^2 - \lambda_1^2} \left[ \lambda (\varphi_1^2 + \varphi_2^2) + \lambda_1 (\varphi_1^2 - \varphi_2^2) \right]
=: \frac{(\lambda-\mu) (\varphi_1^2 + \varphi_2^2)}{a(\lambda)},
\end{equation}
with
\begin{equation}
\label{mu-representation}
a(\lambda) := \lambda^2 - \lambda_1^2 \quad \mbox{\rm and} \quad
\mu := -\lambda_1 \frac{\varphi_1^2 - \varphi_2^2}{\varphi_1^2 + \varphi_2^2} = -\frac{1}{2u} \frac{du}{dx}.
\end{equation}
In addition, we note that
$$
\det[W(\lambda)] = -[W_{11}(\lambda)]^2 - W_{12}(\lambda) W_{21}(\lambda) = -\frac{b(\lambda)}{a(\lambda)}
$$
with
$$
b(\lambda) := \lambda^2 - \lambda_1^2 - 4 \lambda_1 \varphi_1 \varphi_2 - (\varphi_1^2+\varphi_2^2)^2 = \lambda^2 - \lambda_1^2 - E_0.
$$
Since $W_{12}(\lambda)$ has a simple zero at $\lambda = \mu$, then
\begin{equation}
\label{ham-1-1}
[W_{11}(\mu)]^2 = \frac{b(\mu)}{a(\mu)}.
\end{equation}
By substituting (\ref{ham-repr}), (\ref{ham-1-1}), and
$$
\frac{d}{dx} W_{12}(\mu) = -\frac{(\varphi_1^2+\varphi_2^2)}{a(\mu)} \frac{d\mu}{dx}
$$
to equation (\ref{ham-1-2}) and squaring it, we obtain the closed equation on $\mu$:
\begin{equation}
\label{ham-mu}
\frac{1}{4} \left( \frac{d\mu}{dx} \right)^2 = a(\mu) b(\mu) = (\mu^2 - \lambda_1^2)(\mu^2 - \lambda_1^2 - E_0).
\end{equation}
Substituting the representation (\ref{mu-representation}) yields
\begin{equation}
\label{u-representation}
u^2 \left( \frac{d^2u}{dx^2} \right)^2 - 2 u \left( \frac{du}{dx} \right)^2 \left[ \frac{d^2u}{dx^2} - 2(E_0+2\lambda_1^2) u \right]
= 16 \lambda_1^2 (E_0+\lambda_1^2) u^4.
\end{equation}
Let $u$ be a solution of the differential equations (\ref{ode}) with parameters $c$ and $d$.
Substituting (\ref{ode}) to (\ref{u-representation}) yields the relations
(\ref{ode-parameters}) between $(c,d)$ and $(\lambda_1,E_0)$. Hence, the constraint (\ref{u-representation})
is fulfilled  if $u$ satisfies (\ref{ode}) with parameters $(c,d)$ satisfying (\ref{ode-parameters}).
\end{proof}

\begin{proposition}
\label{prop-AKNS-time}
Let $u$, $\varphi = (\varphi_1,\varphi_2)^t$, and $\lambda_1$ be the same as in Proposition \ref{prop-AKNS}.
Then $\varphi(x-ct)$ satisfies the linear system (\ref{3.3}) with $\lambda = \lambda_1$ and $u(x-ct)$.
\end{proposition}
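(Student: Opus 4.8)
The plan is to reduce the claim to a single pointwise algebraic identity and then verify it using the constraints already recorded in Proposition~\ref{prop-AKNS}. Set $\xi := x - ct$ and $\tilde{\varphi}(x,t) := \varphi(\xi)$. Since $\tilde{\varphi}$ depends on $(x,t)$ only through $\xi$, the chain rule gives $\partial_t \tilde{\varphi} = -c\, \partial_x \tilde{\varphi}$, while translating the AKNS spectral problem~(\ref{3.2}) (which $\varphi$ solves at $\lambda = \lambda_1$ with potential $u$) shows that $\partial_x \tilde{\varphi} = U(\lambda_1, u(\xi))\tilde{\varphi}$ for every fixed $t$. Hence $\partial_t \tilde{\varphi} = -c\, U(\lambda_1, u(\xi))\tilde{\varphi}$, and the proposition is equivalent to the pointwise identity
\begin{equation}
\label{VcU}
\bigl[ V(\lambda_1, u) + c\, U(\lambda_1, u) \bigr] \varphi = 0,
\end{equation}
where $u$, $u_x$, $\varphi_1$, $\varphi_2$ are all evaluated at one and the same point.

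First I would simplify $V(\lambda_1,u) + c\, U(\lambda_1,u)$ using the first ODE in~(\ref{ode}), namely $u_{xx} + 2u^3 = cu$: the combination $-2u^3 - u_{xx} + cu$ that appears inside the two off-diagonal entries of $V + cU$ cancels, and what remains is
\begin{equation}
\label{VcU-matrix}
V(\lambda_1,u) + c\, U(\lambda_1,u) = \lambda_1 \left( \begin{array}{cc} c - 4\lambda_1^2 - 2u^2 & -2(2\lambda_1 u + u_x) \\ 2(2\lambda_1 u - u_x) & -(c - 4\lambda_1^2 - 2u^2) \end{array} \right).
\end{equation}
If $\lambda_1 = 0$ the right-hand side is the zero matrix and~(\ref{VcU}) is trivial, so I may assume $\lambda_1 \neq 0$.

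Next I would substitute the relations of Proposition~\ref{prop-AKNS}. By~(\ref{ode-parameters}) we have $c - 4\lambda_1^2 = 2E_0$, and by~(\ref{AKNS-solution}) we have $u = \varphi_1^2 + \varphi_2^2$, $u_x = 2\lambda_1(\varphi_1^2 - \varphi_2^2)$ and $E_0 - u^2 = 4\lambda_1 \varphi_1 \varphi_2$. Consequently $c - 4\lambda_1^2 - 2u^2 = 2(E_0 - u^2) = 8\lambda_1 \varphi_1 \varphi_2$, while $2\lambda_1 u + u_x = 4\lambda_1 \varphi_1^2$ and $2\lambda_1 u - u_x = 4\lambda_1 \varphi_2^2$. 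Inserting these into~(\ref{VcU-matrix}) and multiplying by $(\varphi_1,\varphi_2)^t$, each component of $[V(\lambda_1,u) + c\,U(\lambda_1,u)]\varphi$ collapses to a difference of two equal cubic monomials in $\varphi_1,\varphi_2$: the top component is $8\lambda_1^2 \varphi_1^2 \varphi_2 - 8\lambda_1^2 \varphi_1^2 \varphi_2 = 0$ and the bottom one is $8\lambda_1^2 \varphi_1 \varphi_2^2 - 8\lambda_1^2 \varphi_1 \varphi_2^2 = 0$. This establishes~(\ref{VcU}) and hence the proposition.

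I do not anticipate a genuine obstacle here: this is the familiar statement that an eigenfunction of the spatial part of a Lax pair at frozen time propagates to a solution of the full Lax pair along a travelling wave. The only mild points requiring care are that the quadratic relations~(\ref{AKNS-solution}) --- rather than closed expressions for $\varphi_1,\varphi_2$ themselves --- are precisely what is needed, because $V + cU$ involves $\varphi$ only through $u$, $u_x$ and the products $\varphi_1^2,\varphi_2^2,\varphi_1\varphi_2$, and the degenerate case $\lambda_1 = 0$, which is disposed of above. An alternative would be the abstract compatibility argument (since $u(x-ct)$ solves the mKdV, the function $\psi := \partial_t \tilde{\varphi} - V(\lambda_1,u)\tilde{\varphi}$ satisfies the same first-order system $\psi_x = U(\lambda_1,u)\psi$ as $\tilde{\varphi}$, and one then argues $\psi \equiv 0$), but pinning down $\psi \equiv 0$ ultimately reduces to the same pointwise identity, so the direct computation is the cleaner route.
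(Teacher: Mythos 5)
Your proof is correct and is essentially the paper's argument in matrix form: the paper likewise uses the chain-rule identity $\partial_t\varphi(x-ct)=-c\,\partial_x\varphi(x-ct)$, eliminates $u_{xx}+2u^3$ via the first equation of (\ref{ode}), and then invokes the quadratic relations (\ref{AKNS-solution}) together with $c=4\lambda_1^2+2E_0$ and the $x$-equation (\ref{3.2}) to reduce the $t$-equation to $\partial_t\varphi_1=-c\,\partial_x\varphi_1$ (with the second component by symmetry), which is exactly your identity $[V(\lambda_1,u)+c\,U(\lambda_1,u)]\varphi=0$ written componentwise.
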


\begin{proof}
By using (\ref{ode}), we rewrite the first equation of system (\ref{3.3}) with $\lambda = \lambda_1$ as
\begin{equation}
\label{eq-varphi-1-time}
\partial_t \varphi_1 = - (4 \lambda_1^3 + 2 \lambda_1 u^2) \varphi_1 - (4 \lambda_1^2 u + 2 \lambda_1 u_x + c u) \varphi_2.
\end{equation}
By using (\ref{AKNS-solution}), we note that
\begin{eqnarray*}
(4 \lambda_1^2 + 2 u^2) \varphi_1 + (4 \lambda_1 u + 2 u_x) \varphi_2 =
(4 \lambda_1^2 + 2 u^2 + 8 \lambda_1 \varphi_1 \varphi_2) \varphi_1 = (4 \lambda_1^2 + 2 E_0) \varphi_1.
\end{eqnarray*}
By using (\ref{ode-parameters}) and the first equation in system (\ref{3.2}), equation (\ref{eq-varphi-1-time}) becomes
$$
\partial_t \varphi_1 = - \lambda_1 c \varphi_1 - c u \varphi_2 = -c \partial_x \varphi_1,
$$
hence $\varphi_1(x-ct)$ is a solution of system (\ref{3.2}) and (\ref{3.3}) with $\lambda = \lambda_1$ and $u(x-ct)$.
Similar computations hold for $\varphi_2$  by symmetry from the second equations in systems (\ref{3.2}) and
(\ref{3.3}).
\end{proof}

For the {\rm dn} Jacobian elliptic functions (\ref{elliptic-dn}), we have $c = 2 - k^2$ and $d = k^2 - 1 \leq 0$.
Since $u(x) > 0$ for every $x \in \mathbb{R}$, the periodic eigenfunction $\varphi = (\varphi_1,\varphi_2)^t$ in Proposition \ref{prop-AKNS}
is real with parameters $E_0 = \pm \sqrt{1 - k^2}$ and
\begin{equation}
\label{dn-branch-points-squared}
\lambda_1^2 = \frac{1}{4} \left[ 2 - k^2 \mp 2 \sqrt{1-k^2} \right].
\end{equation}
Taking the positive square root of (\ref{dn-branch-points-squared}), we obtain two
particular real points
\begin{equation}
\label{dn-branch-points}
\lambda_{\pm}(k) := \frac{1}{2} \left( 1 \pm \sqrt{1-k^2} \right),
\end{equation}
such that $0 < \lambda_-(k) < \lambda_+(k) < 1$ for every $k \in (0,1)$.
As $k \to 0$, we have $\lambda_-(k) \to 0$ and $\lambda_+(k) \to 1$,
whereas as $k \to 1$, we have $\lambda_-(k), \lambda_+(k) \to 1/2$.

For the {\rm cn} Jacobian elliptic functions (\ref{elliptic-cn}), we have $c = 2k^2-1$ and $d = k^2(1-k^2) \geq 0$.
Since $u(x)$ is sign-indefinite, the periodic eigenfunction $\varphi = (\varphi_1,\varphi_2)^t$ in Proposition \ref{prop-AKNS}
is complex-valued with parameters $E_0 = \pm i k \sqrt{1 - k^2}$ and
\begin{equation}
\label{cn-branch-points-squared}
\lambda_1^2 = \frac{1}{4} \left[ 2 k^2 - 1 \mp 2 i k \sqrt{1-k^2} \right].
\end{equation}
Defining the square root of (\ref{cn-branch-points-squared}) in the first quadrant of the complex plane,
we obtain
\begin{equation}
\label{cn-branch-points}
\lambda_I(k) := \frac{1}{2} \left( k + i \sqrt{1-k^2} \right).
\end{equation}
As $k \to 0$, we have $\lambda_I(k) \to i/2$, whereas as $k \to 1$, we have $\lambda_I(k) \to 1/2$.

Figure \ref{fig-3} shows the spectral plane of $\lambda$ with the schematic representation
of the Floquet--Bloch spectrum for the {\em dn}-periodic wave with $k = 0.75$ (left) and
the {\em cn}-periodic wave with $k = 0.75$ (right). The branch points $\lambda_{\pm}(k)$
and $\lambda_I(k)$ obtained in (\ref{dn-branch-points}) and (\ref{cn-branch-points}) are
marked explicitly as the end points of the Floquet--Bloch spectral bands away from the imaginary axis.

\begin{figure}[htb]
\begin{center}
\includegraphics[height=4.1cm]{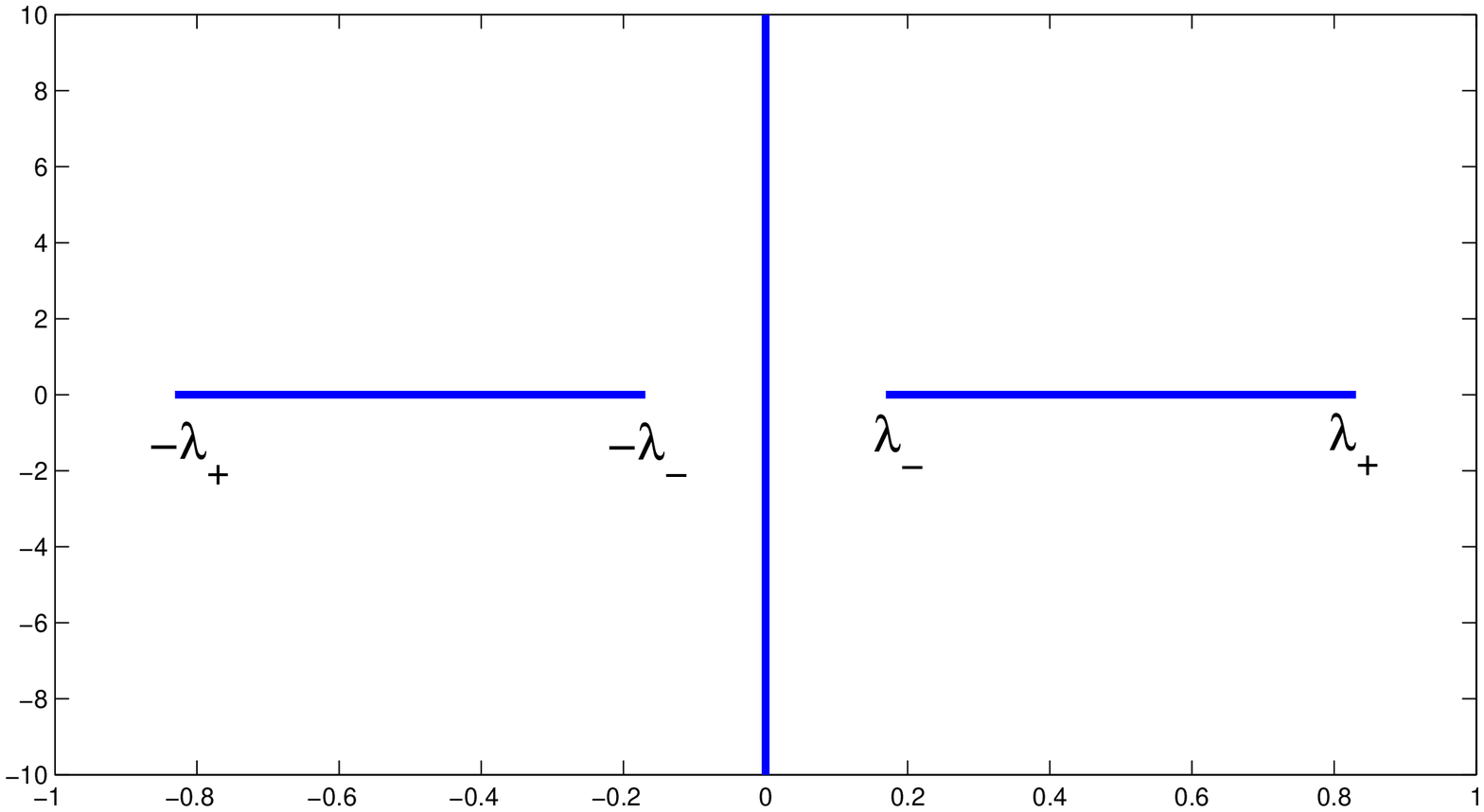}
\includegraphics[height=4.1cm]{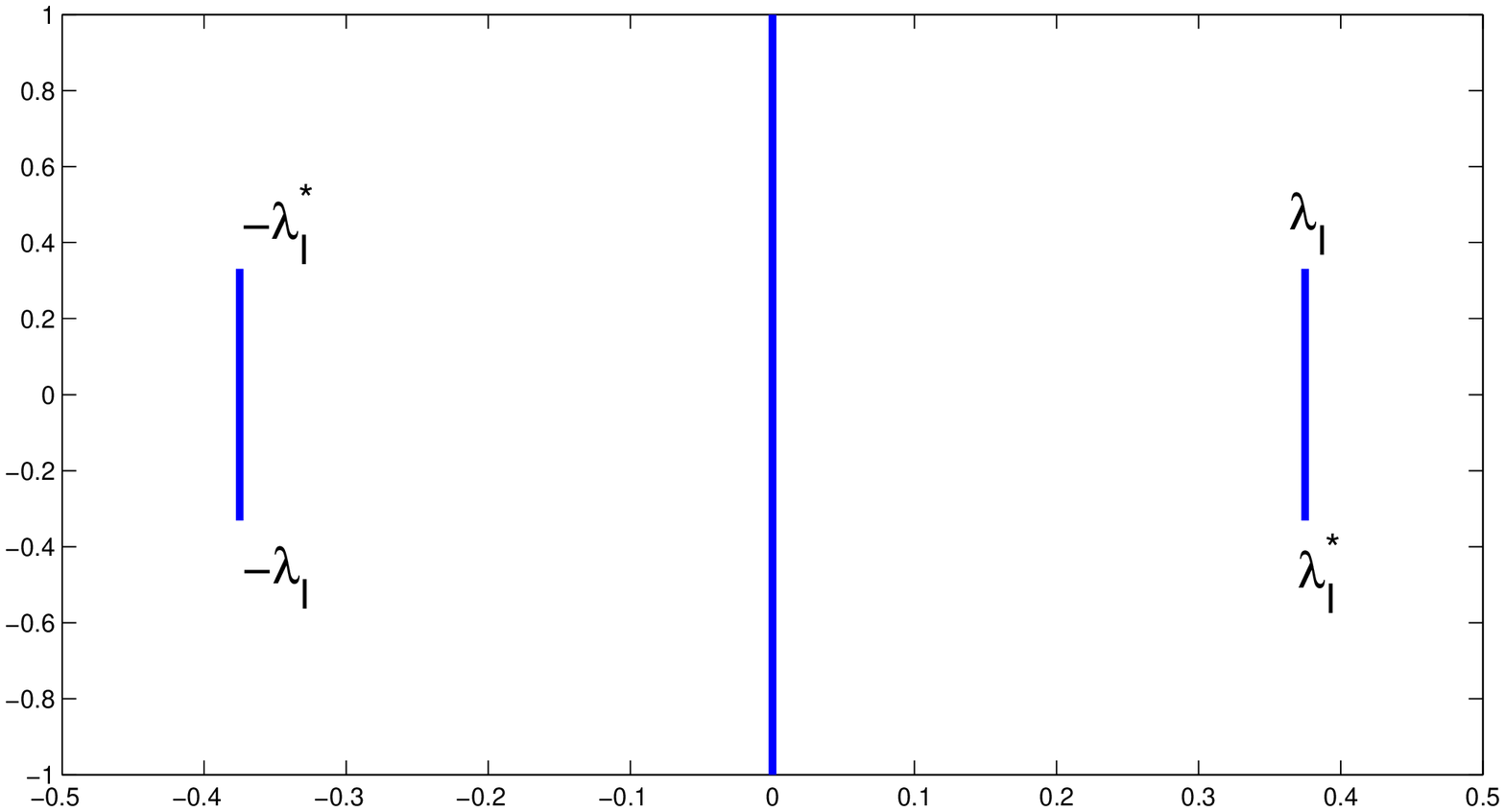}
\end{center}
\caption{The spectral plane for the {\em dn}-periodic (left) and
{\em cn}-periodic (right) waves with $k = 0.75$.}
\label{fig-3}
\end{figure}

\section{Non-periodic solutions of the AKNS spectral problem}

Here we construct the second linearly independent solution
to the AKNS spectral problem (\ref{3.2}) with $\lambda = \lambda_1$ and extend it to satisfy
the linear system (\ref{3.3}). The second solution
is no longer periodic in variables $(x,t)$. The following result represents the corresponding solution.

\begin{proposition}
\label{prop-Second-solution}
Let $u$, $\lambda_1$, $E_0$, and $\varphi = (\varphi_1,\varphi_2)^t$ be the same as in Proposition \ref{prop-AKNS}.
Assume that $u(x)^2 - E_0 \neq 0$ for every $x$.
The second linearly independent solution of the AKNS spectral problem (\ref{3.2}) with $\lambda = \lambda_1$ is given by
$\psi = (\psi_1,\psi_2)^t$, where
\begin{equation}
\label{Wronskian-repr}
\psi_1 = \frac{\theta - 1}{\varphi_2}, \quad \psi_2 = \frac{\theta + 1}{\varphi_1},
\end{equation}
and
\begin{equation}
\label{theta-repr}
\theta(x) = -4 \lambda_1 (u(x)^2-E_0) \int_0^x\frac{u(y)^2}{(u(y)^2-E_0)^2}dy.
\end{equation}
In particular, if $u$ is periodic in $x$, then $\theta$ grows linearly in $x$ as $|x| \to \infty$,
so that $\psi_1$ and $\psi_2$ are not periodic in $x$.
\end{proposition}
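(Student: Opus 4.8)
The plan is to verify directly that $\psi = (\psi_1,\psi_2)^t$ given by (\ref{Wronskian-repr})--(\ref{theta-repr}) solves the AKNS spectral problem (\ref{3.2}) at $\lambda = \lambda_1$, and then to analyze the growth of $\theta$. First I would use the standard reduction-of-order idea: given the known solution $\varphi$ of the first-order $2\times2$ system $\varphi_x = U(\lambda_1,u)\varphi$, a second solution can be found with a Wronskian ansatz. Since $\mathrm{tr}\, U = 0$, the Wronskian $W(\varphi,\psi) = \varphi_1 \psi_2 - \varphi_2 \psi_1$ is constant in $x$; normalizing $W = 2$ and using the relations (\ref{AKNS-solution}), in particular $\varphi_1^2 + \varphi_2^2 = u$ and $4\lambda_1 \varphi_1\varphi_2 = E_0 - u^2$, one checks that the ansatz $\psi_1 = (\theta-1)/\varphi_2$, $\psi_2 = (\theta+1)/\varphi_1$ gives $W(\varphi,\psi) = \varphi_1(\theta+1)/\varphi_1 - \varphi_2(\theta-1)/\varphi_2 = 2$, automatically. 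So the content is to show that this ansatz is consistent with the differential equation, which forces the stated formula for $\theta$.

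Concretely, I would substitute (\ref{Wronskian-repr}) into the two scalar equations $\psi_{1,x} = \lambda_1 \psi_1 + u \psi_2$ and $\psi_{2,x} = -u\psi_1 - \lambda_1 \psi_2$, and use $\varphi_{1,x} = \lambda_1\varphi_1 + u\varphi_2$, $\varphi_{2,x} = -u\varphi_1 - \lambda_1\varphi_2$ to eliminate the derivatives of $\varphi_1,\varphi_2$. Each equation should collapse to a single first-order ODE for $\theta$ of the form $\theta_x = f(x)$, and the two equations must yield the \emph{same} $f$; that common value should simplify — using (\ref{AKNS-solution}) to express $\varphi_1^2,\varphi_2^2,\varphi_1\varphi_2$ in terms of $u$, $u_x$, $E_0$, $\lambda_1$ — to $\theta_x = -4\lambda_1 u^2/(u^2 - E_0) + (\text{terms proportional to }\theta)\cdot(\dots)$. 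Differentiating the prefactor $-4\lambda_1(u^2 - E_0)$ in (\ref{theta-repr}) produces exactly the $\theta$-proportional correction (via $\frac{d}{dx}[-4\lambda_1(u^2-E_0)]\cdot\int = \theta \cdot \frac{(u^2-E_0)_x}{u^2-E_0}$), so that the integral representation (\ref{theta-repr}) is precisely the solution of that ODE with $\theta(0)=0$. The hypothesis $u(x)^2 - E_0 \neq 0$ is what guarantees the integrand and the expressions $1/\varphi_1$, $1/\varphi_2$ stay finite (note $4\lambda_1\varphi_1\varphi_2 = E_0 - u^2$ shows $\varphi_1\varphi_2$ never vanishes, so neither $\varphi_1$ nor $\varphi_2$ does). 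Linear independence from $\varphi$ follows from $W(\varphi,\psi) = 2 \neq 0$.

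For the final assertion about non-periodicity, I would examine the large-$|x|$ behavior of $\theta$. When $u$ is periodic with period $L$, the integrand $g(y) := u(y)^2/(u(y)^2 - E_0)^2$ is periodic and, crucially, has nonzero mean $\bar g := \frac{1}{L}\int_0^L g(y)\, dy > 0$ (the integrand is a ratio with a positive numerator $u^2$ that is not identically zero, over a positive or nonvanishing denominator, so its average is strictly positive in the real {\em dn} case; in the complex {\em cn} case one must check the mean does not vanish, which can be done from the explicit elliptic integrals). Then $\int_0^x g(y)\, dy = \bar g\, x + P(x)$ with $P$ bounded and periodic, so $\int_0^x g(y)\,dy$ grows linearly. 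Since the prefactor $-4\lambda_1(u(x)^2 - E_0)$ is periodic and bounded away from zero, $\theta(x)$ grows linearly in $x$ as $|x|\to\infty$ (more precisely, $\theta(x)/x$ stays bounded away from $0$ along sequences $x_n = nL$), hence $\psi_1 = (\theta-1)/\varphi_2$ and $\psi_2 = (\theta+1)/\varphi_1$ cannot be periodic since $\varphi_1,\varphi_2$ are bounded periodic functions bounded away from zero.

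I expect the main obstacle to be the algebraic verification that the two scalar equations produced by the ansatz genuinely collapse to the \emph{same} scalar ODE for $\theta$, and that this ODE has (\ref{theta-repr}) as its solution — this requires careful and repeated use of all three identities in (\ref{AKNS-solution}) together with the constant-Wronskian normalization, and it is easy to drop a factor of $2$ or a sign of $\lambda_1$. A secondary subtlety is confirming $\bar g \neq 0$ in the {\em cn} case where $E_0$ is imaginary and $g$ is complex-valued; one can sidestep this by computing $\int_0^L g(y)\,dy$ explicitly in terms of complete elliptic integrals, or by arguing that if $\theta$ were bounded then $\psi$ would be periodic and the AKNS problem would have a two-dimensional space of periodic solutions at $\lambda_1$, contradicting that $\lambda_1$ is a simple branch point of the Floquet--Bloch spectrum (as indicated in the discussion around (\ref{dn-branch-points}) and (\ref{cn-branch-points})).
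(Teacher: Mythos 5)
Your proposal is correct and follows essentially the same route as the paper: the constant (traceless-matrix) Wronskian normalized to $2$ produces the ansatz (\ref{Wronskian-repr}), substitution into (\ref{3.2}) together with the relations (\ref{AKNS-solution}) collapses to the scalar ODE $\frac{d}{dx}\bigl[\theta/(u^2-E_0)\bigr] = -4\lambda_1 u^2/(u^2-E_0)^2$, which integrated with $\theta(0)=0$ gives (\ref{theta-repr}), while the hypothesis $u^2-E_0\neq 0$ guarantees $\varphi_1\varphi_2\neq 0$ via $4\lambda_1\varphi_1\varphi_2=E_0-u^2$. Your extra concern about the nonzero mean of the (complex-valued) integrand in the {\em cn} case is legitimate and is settled in the paper only later (Section 6, where the imaginary part of the integrand is shown to have a nonzero mean), so your treatment is consistent with, and slightly more explicit than, the paper's proof of this proposition.
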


\begin{proof}
Since the AKNS spectral problem (\ref{3.2}) is related to the traceless matrix, the Wronskian
of the two linearly independent solutions $\varphi = (\varphi_1,\varphi_2)^t$ and $\psi = (\psi_1,\psi_2)^t$
is independent of $x$. Normalizing it by $2$, we write the relation
$$
\varphi_1 \psi_2 - \varphi_2 \psi_1 = 2,
$$
from which the representation (\ref{Wronskian-repr}) follows with arbitrary $\theta$. If $u(x)^2 - E_0 \neq 0$
for every $x$, then $\varphi_1(x) \neq 0$ and $\varphi_2(x) \neq 0$ for every $x$.
Substituting (\ref{Wronskian-repr})
to (\ref{3.2}), we obtain the following scalar linear differential equation for $\theta$:
\begin{eqnarray*}
\frac{d \theta}{d x} = u \theta \frac{\varphi_2^2-\varphi_1^2}{\varphi_1 \varphi_2} + u \frac{\varphi_1^2+\varphi_2^2}{\varphi_1 \varphi_2}.
\end{eqnarray*}
By using relations (\ref{AKNS-solution}), we rewrite it in the equivalent forms:
\begin{eqnarray*}
\frac{d \theta}{d x} = \theta \frac{2 u u'}{u^2 - E_0} - \frac{4 \lambda_1 u^2}{u^2 - E_0} \quad \Rightarrow \quad
\frac{d}{dx} \left[ \frac{\theta}{u^2 - E_0} \right] = - \frac{4 \lambda_1 u^2}{(u^2 - E_0)^2}.
\end{eqnarray*}
Integrating the last equation with the boundary condition $\theta(0) = 0$, we obtain (\ref{theta-repr}).
\end{proof}

\begin{proposition}
\label{prop-Second-solution-time}
Let $u$, $\lambda_1$, $E_0$, $\varphi = (\varphi_1,\varphi_2)^t$, and $\psi = (\psi_1,\psi_2)^t$ be the same as
in Proposition \ref{prop-Second-solution}.
Then, $\psi = (\psi_1,\psi_2)^t$ expressed by (\ref{Wronskian-repr}) satisfies the linear system (\ref{3.3})
with $\lambda = \lambda_1$ and $u(x-ct)$ if $\theta$ is expressed by
\begin{equation}
\label{theta-repr-tim}
\theta(x,t) = -4 \lambda_1 (u(x-ct)^2-E_0) \left[ \int_0^{x-ct} \frac{u(y)^2}{(u(y)^2-E_0)^2} dy - t\right].
\end{equation}
\end{proposition}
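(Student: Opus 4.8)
The plan is to reduce the statement to a single pointwise matrix identity by first splitting $\psi$ into a ``traveling'' part and a part growing linearly in $t$. Using the last relation in (\ref{AKNS-solution}), namely $u^2 - E_0 = -4\lambda_1 \varphi_1 \varphi_2$, the function (\ref{theta-repr-tim}) can be rewritten as $\theta(x,t) = \theta_0(x-ct) + 4\lambda_1 t \bigl( u(x-ct)^2 - E_0 \bigr)$, where $\theta_0(\xi)$ denotes the expression (\ref{theta-repr}) with $x$ replaced by $\xi$. Substituting this into (\ref{Wronskian-repr}) and simplifying the $t$-proportional terms by the same relation, one obtains the representation
\begin{equation*}
\psi(x,t) = \psi^{(0)}(x-ct) - 16 \lambda_1^2 \, t \, \varphi(x-ct),
\end{equation*}
where $\psi^{(0)}$ is the spatial second solution of (\ref{3.2}) from Proposition \ref{prop-Second-solution} (that is, (\ref{Wronskian-repr}) with $\theta = \theta_0$), and $\varphi$ is the spatial eigenfunction from Proposition \ref{prop-AKNS}. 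Since $(\psi^{(0)})'(\xi) = U(\lambda_1,u(\xi)) \psi^{(0)}(\xi)$ and $\varphi'(\xi) = U(\lambda_1,u(\xi)) \varphi(\xi)$, this representation already shows that $\psi$ solves (\ref{3.2}) with $\lambda=\lambda_1$ and $u(x-ct)$; it remains to verify (\ref{3.3}).

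Next I would compute $\psi_t - V(\lambda_1, u(x-ct))\psi$ directly from the above representation. By the chain rule and Proposition \ref{prop-Second-solution}, $\partial_t \psi^{(0)}(x-ct) = -c\, U(\lambda_1,u(x-ct)) \psi^{(0)}(x-ct)$, whereas by Proposition \ref{prop-AKNS-time}, $\partial_t \varphi(x-ct) = V(\lambda_1,u(x-ct)) \varphi(x-ct)$. The contributions $-16\lambda_1^2 t\,\partial_t \varphi(x-ct)$ and $-V\bigl(-16\lambda_1^2 t\,\varphi(x-ct)\bigr)$ then cancel, and with all profiles evaluated at $x-ct$ one is left with
\begin{equation*}
\psi_t - V \psi = - \bigl( V(\lambda_1,u) + c\, U(\lambda_1,u) \bigr) \psi^{(0)} - 16 \lambda_1^2 \, \varphi .
\end{equation*}
Hence the proof reduces to the pointwise identity $\bigl( V(\lambda_1,u) + c\, U(\lambda_1,u) \bigr) \psi^{(0)} = -16 \lambda_1^2 \varphi$ for the traveling-wave profile $u$ obeying (\ref{ode}) with (\ref{ode-parameters}).

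To establish this identity I would exploit that $\varphi$ lies in the kernel of the $2 \times 2$ matrix $M := V(\lambda_1,u) + c\, U(\lambda_1,u)$ at every point: indeed $\varphi(x-ct)$ solves both (\ref{3.2}) and (\ref{3.3}) by Propositions \ref{prop-AKNS} and \ref{prop-AKNS-time}, so $V\varphi = \partial_t\varphi(x-ct) = -c\,U\varphi$, i.e. $M\varphi=0$. Thus $M$ has rank at most one and is determined by its first column together with the kernel direction $\varphi=(\varphi_1,\varphi_2)^t$. Evaluating the first column with the help of $c = 4\lambda_1^2 + 2E_0$, the first equation in (\ref{ode}) (through $2u^3 + u_{xx} = cu$), and the relations (\ref{AKNS-solution}) (through $u=\varphi_1^2+\varphi_2^2$, $u_x=2\lambda_1(\varphi_1^2-\varphi_2^2)$, $4\lambda_1\varphi_1\varphi_2=E_0-u^2$) yields $M_{11} = 8\lambda_1^2 \varphi_1\varphi_2$ and $M_{21} = 8\lambda_1^2 \varphi_2^2$, whence $M = 8\lambda_1^2\,(\varphi_1,\varphi_2)^t(\varphi_2,-\varphi_1)$. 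Acting on $\psi^{(0)}$ and using the Wronskian normalization $\varphi_1 \psi^{(0)}_2 - \varphi_2 \psi^{(0)}_1 = 2$ from the proof of Proposition \ref{prop-Second-solution}, one gets $M\psi^{(0)} = 8\lambda_1^2\bigl(\varphi_2\psi^{(0)}_1 - \varphi_1\psi^{(0)}_2\bigr)(\varphi_1,\varphi_2)^t = -16\lambda_1^2\varphi$, which closes the argument.

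None of the steps is genuinely hard. The key observation is the secular-plus-traveling decomposition of $\psi$; once it is in place, the rank-one structure of $M$ turns the verification of (\ref{3.3}) into a short matrix computation that never touches the elliptic integral hidden inside $\theta$. The only place demanding care is the bookkeeping in the first column of $M$ — one must substitute (\ref{ode-parameters}), the profile equation (\ref{ode}), and the three quadratic relations (\ref{AKNS-solution}) consistently — and the sign in the Wronskian normalization; as a consistency check I would also confirm that the entry $V_{12}+cU_{12}$ completes $M$ to the claimed rank-one form $8\lambda_1^2(\varphi_1,\varphi_2)^t(\varphi_2,-\varphi_1)$.
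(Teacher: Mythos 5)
Your proof is correct, and it follows a genuinely different route from the paper. The paper substitutes the representation (\ref{Wronskian-repr}) directly into the $t$-equation (\ref{3.3}) and, using (\ref{AKNS-solution}) and the time evolution of $\varphi$, derives the scalar transport equation $\partial_t \theta + c\, \partial_x \theta = 4\lambda_1 (u^2 - E_0)$; writing $\theta = -4\lambda_1(u^2 - E_0)\chi$ reduces this to $\partial_t \chi + c\,\partial_x\chi = -1$, whose solution $\chi = -t + f(x-ct)$ with $f$ taken from (\ref{theta-repr}) produces the formula (\ref{theta-repr-tim}); the paper's argument thus \emph{derives} $\theta(x,t)$ rather than verifying it. You instead verify the given formula by the decomposition $\psi(x,t) = \psi^{(0)}(x-ct) - 16\lambda_1^2\, t\, \varphi(x-ct)$ (which is indeed equivalent to (\ref{theta-repr-tim}) via $u^2 - E_0 = -4\lambda_1\varphi_1\varphi_2$), so that the check of (\ref{3.3}) collapses to the pointwise identity $\bigl(V(\lambda_1,u) + c\,U(\lambda_1,u)\bigr)\psi^{(0)} = -16\lambda_1^2\varphi$; your computation of this matrix is correct — with $2u^3 + u_{xx} = cu$, $c = 4\lambda_1^2 + 2E_0$ and (\ref{AKNS-solution}) one gets exactly $V + cU = 8\lambda_1^2\, \varphi \otimes (\varphi_2,-\varphi_1)$, and the Wronskian normalization $\varphi_1\psi_2^{(0)} - \varphi_2\psi_1^{(0)} = 2$ (which holds for (\ref{Wronskian-repr}) with any $\theta$) closes the argument; note also that the standing assumption $u^2 - E_0 \neq 0$ guarantees $\varphi_1\varphi_2 \neq 0$, so the kernel/first-column reasoning you invoke is legitimate, though the direct computation of all four entries makes it unnecessary. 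What the paper's route buys is the construction of (\ref{theta-repr-tim}) and, implicitly, its uniqueness up to translations (as remarked after the proposition); what your route buys is a clean structural explanation of the secular growth as the addition of a multiple of $t\varphi$, with the elliptic integral inside $\theta$ never entering the verification.
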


\begin{proof}
By using (\ref{ode}), we rewrite the first equation of system (\ref{3.3}) with $\lambda = \lambda_1$ as
\begin{equation}
\label{eq-psi-1-time}
\partial_t \psi_1 = - (4 \lambda_1^3 + 2 \lambda_1 u^2) \psi_1 - (4 \lambda_1^2 u + 2 \lambda_1 u_x + c u) \psi_2.
\end{equation}
By using (\ref{AKNS-solution}), (\ref{Wronskian-repr}),
and expressing $\partial_t \varphi_2$ from the second equation of system (\ref{3.3}),
we obtain from (\ref{eq-psi-1-time}):
\begin{eqnarray*}
\partial_t \theta & = & \frac{(4 \lambda_1^2 u - 2 \lambda_1 u_x + c u) \varphi_1 (\theta - 1)}{\varphi_2}
- \frac{(4 \lambda_1^2 u + 2 \lambda_1 u_x + c u) \varphi_2 (\theta + 1)}{\varphi_1} \\
& = & - 16 \lambda_1^2 \varphi_1 \varphi_2 - \frac{c u}{\varphi_1 \varphi_2}
\left[ \theta (\varphi_2^2-\varphi_1^2) + \varphi_1^2 + \varphi_2^2 \right] \\
& = & 4 \lambda_1 (u^2 - E_0) - c \partial_x \theta.
\end{eqnarray*}
Let us represent $\theta = -4 \lambda_1 (u^2 - E_0) \chi$ so that $\chi$ satisfies
$$
\partial_t \chi =  - c \partial_x \chi - 1.
$$
Hence $\chi(x,t) = - t + f(x-ct)$, where $f$ is obtained from (\ref{theta-repr}) in the form
$$
f(x) = \int_0^{x} \frac{u(y)^2}{(u(y)^2-E_0)^2} dy
$$
to yield the representation (\ref{theta-repr-tim}). Similar computations hold for $\psi_2$ by symmetry
from the second equations in systems (\ref{3.2}) and (\ref{3.3}).
\end{proof}

Note that a more general solution for $\psi = (\psi_1,\psi_2)^t$
is defined arbitrary up to an addition to the first solution $\varphi = (\varphi_1,\varphi_2)^t$. However,
this addition is equivalent to the arbitrary choice of the lower limit in the integral (\ref{theta-repr-tim}),
which is then equivalent to the translation in time $t$. Thus, the second linearly independent solution
in the form (\ref{Wronskian-repr}) and (\ref{theta-repr-tim}) is unique up to the translation
in $x$ and $t$.

\section{One-fold, two-fold, and $N$-fold Darboux transformations}

Here we give the explicit formulas for the one-fold and two-fold Darboux transformations for the focusing mKdV equation (\ref{mKdV}),
as well as the general formula for the $N$-fold Darboux transformation.
Although the formal derivation of the $N$-fold Darboux transformation
can be found in several sources, e.g. in book \cite{GuHuZhou}
or original papers \cite{GengTam,WangChen}, we find it useful to derive
the explicit transformation formulas by using purely algebraic calculations.

By definition, we say that $T(\lambda)$ is a Darboux transformation if
\begin{equation}
\label{Darboux}
\widetilde{\varphi} = T(\lambda) \varphi,
\end{equation}
where $\varphi$ satisfies (\ref{3.2})--(\ref{3.3}) for a particular potential $u$ and
$\widetilde{\varphi}$ satisfies (\ref{3.2})--(\ref{3.3}) for a new potential $\widetilde{u}$,
which is related to $u$. The transformation formulas between $\varphi$ and $\widetilde{\varphi}$
follow from the Darboux equations
\begin{equation}
\label{darboux-1}
\partial_x T(\lambda) + T(\lambda) U(\lambda,u) = U(\lambda,\widetilde{u}) T(\lambda)
\end{equation}
and
\begin{equation}
\label{darboux-2}
\partial_t T(\lambda) + T(\lambda) V(\lambda,u) = V(\lambda,\widetilde{u}) T(\lambda).
\end{equation}

In many derivations, e.g. in \cite{GengTam,GuHuZhou,WangChen}, the $N$-fold Darboux transformation is deduced
formally from a linear system of equations imposed on the coefficients of the polynomial representation
of $T(\lambda)$ without checking all the constraints arising from the Darboux equations (\ref{darboux-1}) and (\ref{darboux-2}).
In order to avoid such formal computations, we give in Appendix A a rigorous derivation of the $N$-fold Darboux transformation
in the explicit form and show how the Darboux equations (\ref{darboux-1}) and (\ref{darboux-2}) are satisfied.
Our derivation relies on a particular implementation of the dressing method \cite{ZS1,ZS2} which was
recently reviewed in the context of the cubic NLS equation in \cite{Contreras}.

The general $N$-fold Darboux transformation is given by the following theorem.

\begin{theorem}
\label{theorem-N-fold}
Let $u$ be a smooth solution of the mKdV equation (\ref{mKdV}). Let $\varphi^{(k)} = (p_k,q_k)^t$, $1 \leq k \leq N$ be a particular
smooth nonzero solution of system (\ref{3.2}) and (\ref{3.3}) with fixed $\lambda = \lambda_k \in \mathbb{C} \backslash \{0\}$
and potential $u$. Assume that $\lambda_k \neq \pm \lambda_j$ for every $k \neq j$.
Let $\{\widetilde{\varphi}^{(k)}\}_{1 \leq k \leq N}$ be a solution of the linear algebraic system
\begin{equation}
\label{lin-system}
\sigma_3 \sigma_1 \varphi^{(j)} = \sum_{k=1}^N \frac{\langle \varphi^{(j)}, \varphi^{(k)} \rangle}{\lambda_j + \lambda_k} \widetilde{\varphi}^{(k)}, \quad 1 \leq j \leq N,
\end{equation}
where $\langle \varphi^{(j)}, \varphi^{(k)} \rangle := p_j p_k + q_j q_k$ is the inner vector product.
Assume that the linear system (\ref{lin-system}) has a unique solution. Then,
$\widetilde{\varphi}^{(k)} = (\widetilde{p}_k,\widetilde{q}_k)^t$, $1 \leq k \leq N$ is a particular
solution of system (\ref{3.2}) and (\ref{3.3})
with $\lambda = \lambda_k$ and the new potential $\widetilde{u}$ given by
\begin{equation}
\label{N-fold}
\widetilde{u} = u + 2 \sum_{j=1}^N \widetilde{p}_j p_j = u - 2 \sum_{j=1}^N \widetilde{q}_j q_j.
\end{equation}
Consequently, $\widetilde{u}$ is a new solution of the mKdV equation (\ref{mKdV}).
\end{theorem}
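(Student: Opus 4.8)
The plan is to realize the transformation as a \emph{dressing} of the Lax pair, built around the rational-in-$\lambda$ matrix
\[
T(\lambda) = I + \sigma_3\sigma_1 \sum_{k=1}^N \frac{\widetilde{\varphi}^{(k)} (\varphi^{(k)})^t}{\lambda + \lambda_k},
\]
which has $N$ simple poles at $\lambda = -\lambda_k$, mirroring the reduction $\lambda \mapsto -\lambda$ of the AKNS problem (\ref{3.2}). Three elementary algebraic facts will organize everything: (i) $(\sigma_3\sigma_1)^2 = -I$; (ii) $\sigma_3\sigma_1 U(\lambda,u) (\sigma_3\sigma_1)^{-1} = U(-\lambda,u)$ and likewise $\sigma_3\sigma_1 V(\lambda,u)(\sigma_3\sigma_1)^{-1} = V(-\lambda,u)$, so that $\sigma_3\sigma_1 \varphi$ solves (\ref{3.2})--(\ref{3.3}) at $-\lambda$ whenever $\varphi$ solves them at $\lambda$, and, since $-U(-\lambda,u)^t = U(\lambda,u)$, the row vector $(\varphi)^t$ is an adjoint solution at $-\lambda$; (iii) $\sigma_3^t = \sigma_3$, $\sigma_1^t = \sigma_1$, $(\sigma_3\sigma_1)^t = -\sigma_3\sigma_1$. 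Using (i), the linear system (\ref{lin-system}) is precisely the statement $T(\lambda_j)\varphi^{(j)} = 0$, i.e.\ $\varphi^{(j)} \in \ker T(\lambda_j)$, the standard eigenvector condition for a one-step dressing at $\lambda_j$.

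The core step is to show that each $\widetilde{\varphi}^{(k)}$ solves (\ref{3.2}) at $\lambda=\lambda_k$ with the new potential $\widetilde{u} := u + 2\sum_{j=1}^N \widetilde{p}_j p_j$. I would differentiate the identity (\ref{lin-system}) in $x$ and substitute $\partial_x \varphi^{(j)} = U(\lambda_j,u)\varphi^{(j)}$. Using (iii) and $U(\lambda,u) = \lambda\sigma_3 + u\sigma_3\sigma_1$, the $u$-dependent contributions cancel and one obtains the key identity
\[
\partial_x \langle \varphi^{(j)}, \varphi^{(k)}\rangle = (\lambda_j + \lambda_k)\,\langle \varphi^{(j)}, \sigma_3 \varphi^{(k)} \rangle,
\]
whose prefactor $(\lambda_j+\lambda_k)$ cancels the denominator in (\ref{lin-system}). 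Collecting the remaining terms, substituting the definition of $\widetilde{u}$, and invoking (\ref{lin-system}) once more, one checks that the residual vectors $R^{(k)} := \partial_x \widetilde{\varphi}^{(k)} - U(\lambda_k,\widetilde{u})\widetilde{\varphi}^{(k)}$ satisfy the homogeneous version of (\ref{lin-system}), namely $\sum_{k=1}^N (\lambda_j+\lambda_k)^{-1}\langle \varphi^{(j)},\varphi^{(k)}\rangle R^{(k)} = 0$ for every $j$. By the uniqueness hypothesis in the theorem, $R^{(k)} \equiv 0$; that is, $\widetilde{\varphi}^{(k)}$ solves (\ref{3.2}) with $\lambda=\lambda_k$ and potential $\widetilde{u}$.

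The time equation is handled identically: differentiate (\ref{lin-system}) in $t$, substitute $\partial_t \varphi^{(j)} = V(\lambda_j,u)\varphi^{(j)}$, and use (iii) together with the definite parity in $\lambda$ of the coefficient matrices of $V(\lambda,u)$; the $u^3$ and $u_{xx}$ terms cancel between indices $j$ and $k$, while every surviving term again carries the factor $\lambda_j+\lambda_k$. One then concludes, with the same $\widetilde{u}$ and the same uniqueness argument, that $\partial_t\widetilde{\varphi}^{(k)} = V(\lambda_k,\widetilde{u})\widetilde{\varphi}^{(k)}$. Along the way the second expression $\widetilde{u} = u - 2\sum_j \widetilde{q}_j q_j$ in (\ref{N-fold}) is obtained by reading off the two off-diagonal entries of the $O(\lambda^0)$ part of the now-established relation $\partial_x T(\lambda) + T(\lambda) U(\lambda,u) = U(\lambda,\widetilde{u}) T(\lambda)$, whose $O(\lambda)$ part is the trivial identity $\sigma_3 = \sigma_3$ and whose poles at $\lambda = -\lambda_k$ cancel precisely because $\widetilde{\varphi}^{(k)}$ solves the new problem and $(\varphi^{(k)})^t$ the old adjoint one; the analogous computation gives (\ref{darboux-2}). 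Finally, with both Darboux equations holding for all $\lambda$ and $u$ solving (\ref{mKdV}), the zero-curvature operator $\mathcal{F}[\widetilde{u}] := \partial_t U(\lambda,\widetilde{u}) - \partial_x V(\lambda,\widetilde{u}) + [U(\lambda,\widetilde{u}),V(\lambda,\widetilde{u})]$ satisfies $\mathcal{F}[\widetilde{u}]\,T(\lambda) = T(\lambda)\,\mathcal{F}[u] = 0$, and since $T(\lambda)$ is invertible for all but finitely many $\lambda$, $\mathcal{F}[\widetilde{u}] \equiv 0$, so $\widetilde{u}$ solves the mKdV equation.

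The hard part will be the bookkeeping inside the differentiation steps. After substituting the old Lax equations one is left with a combination of $\sum_k \langle \varphi^{(j)},\sigma_3\varphi^{(k)}\rangle \widetilde{\varphi}^{(k)}$, terms $\sum_k (\lambda_j+\lambda_k)^{-1}\langle\varphi^{(j)},\varphi^{(k)}\rangle U(\lambda_k,\widetilde u)\widetilde{\varphi}^{(k)}$, and the explicit vector $-\lambda_j \sigma_1 \varphi^{(j)} - u\varphi^{(j)}$ produced on the left-hand side by $\sigma_3\sigma_1 U(\lambda_j,u)$; these must be reorganized, via partial fractions and (\ref{lin-system}) itself, into exactly $\sum_k (\lambda_j+\lambda_k)^{-1}\langle\varphi^{(j)},\varphi^{(k)}\rangle R^{(k)}$, with the leftover vanishing only because $\widetilde{u} - u = 2\sum_j \widetilde{p}_j p_j$. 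The $t$-version is heavier still, since $V$ is cubic in $\lambda$ and generates more partial-fraction terms. Everything else is structural, and the explicit one-fold and two-fold transformation formulas then follow by setting $N=1$ and $N=2$ and solving (\ref{lin-system}) directly.
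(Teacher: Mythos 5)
Your overall strategy coincides with the paper's: up to the reflection symmetry $\sigma_3\sigma_1 U(\lambda,u)(\sigma_3\sigma_1)^{-1}=U(-\lambda,u)$, your dressing matrix with poles at $\lambda=-\lambda_k$ is exactly the paper's ansatz (\ref{T-lambda}) conjugated by $\sigma_3\sigma_1$ and evaluated at $-\lambda$, and your central mechanism --- differentiate (\ref{lin-system}) in $x$ and in $t$, show that the residual vectors satisfy the homogeneous system with the symmetric coefficient matrix (\ref{entry-A}), and invoke the uniqueness hypothesis --- is precisely the argument of Appendix A (compare your $x$-step with (\ref{x-relation-cancel})), followed by the verification of (\ref{darboux-1})--(\ref{darboux-2}) residue by residue and order by order and the zero-curvature conclusion. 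Your identification of (\ref{lin-system}) with the kernel condition $T(\lambda_j)\varphi^{(j)}=0$ and the explicit final zero-curvature argument are nice touches not spelled out in the paper, but they do not change the route.

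There is, however, one concrete gap in the ordering of your steps. You propose to obtain the second equality in (\ref{N-fold}), i.e. $\sum_j\widetilde{p}_jp_j=-\sum_j\widetilde{q}_jq_j$, by ``reading off'' the $O(\lambda^0)$ part of the ``now-established'' relation (\ref{darboux-1}). This is circular: checking the $O(\lambda)$ term and the residues at the poles does not establish (\ref{darboux-1}), since a rational function with no poles and no linear growth may still have a nonzero constant term, and verifying that constant term \emph{is} the identity in question. Moreover, you need this identity earlier than you think: in your $x$-step, after substituting $\widetilde{u}-u=2\sum_j\widetilde{p}_jp_j$, the leftover is $(\widetilde{u}-u)\varphi^{(j)}-\sum_k\left[\langle\varphi^{(j)},\sigma_3\varphi^{(k)}\rangle\widetilde{\varphi}^{(k)}+\langle\varphi^{(j)},\varphi^{(k)}\rangle\sigma_3\widetilde{\varphi}^{(k)}\right]$, whose first component cancels by the definition of $\widetilde{u}$ but whose second component cancels only if in addition $\widetilde{u}-u=-2\sum_j\widetilde{q}_jq_j$. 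The repair is cheap and is exactly what the paper does first: since $A_{jk}=A_{kj}$, the two halves of (\ref{lin-system}) give $\sum_j\widetilde{q}_jq_j=\sum_{j,k}A_{jk}\widetilde{p}_k\widetilde{q}_j=\sum_{j,k}A_{jk}\widetilde{p}_j\widetilde{q}_k=-\sum_j\widetilde{p}_jp_j$ (the paper's (\ref{symm1})); prove this before the residual computation. Be warned also that the $t$-direction needs a further weighted identity of the same type (the paper's (\ref{symm2})), together with several auxiliary relations obtained by differentiating the $x$-relation again, beyond the $(\lambda_j+\lambda_k)$ bookkeeping you describe; with these supplements your plan goes through as in Appendix A.
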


The proof of Theorem \ref{theorem-N-fold} is given in Appendix A.
The following two propositions represent the one-fold and two-fold Darboux transformation formulas
deduced from Theorem \ref{theorem-N-fold} for $N = 1$ and $N = 2$ respectively.

\begin{proposition}\label{prop-one-fold}
Let $u$ be a smooth solution of the mKdV equation (\ref{mKdV}). Let $\varphi = (p,q)^t$ be a particular
smooth nonzero  solution of system (\ref{3.2}) and (\ref{3.3}) with fixed $\lambda = \lambda_1 \in \mathbb{C} \backslash \{0\}$.
Then,
\begin{equation}
\label{one-fold}
\widetilde{u} = u + \frac{4 \lambda_1 p q}{p^2+q^2}
\end{equation}
is a new solution of the mKdV equation (\ref{mKdV}).
\end{proposition}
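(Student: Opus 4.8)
The plan is to derive Proposition \ref{prop-one-fold} directly as the special case $N=1$ of Theorem \ref{theorem-N-fold}, so that essentially no new analytic input is required beyond an explicit linear-algebra computation.

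First I would specialize the linear system (\ref{lin-system}). With $N=1$ and $\varphi = \varphi^{(1)} = (p,q)^t$, the single equation reads
\begin{equation*}
\sigma_3 \sigma_1 \varphi = \frac{\langle \varphi, \varphi \rangle}{2 \lambda_1}\, \widetilde{\varphi}, \qquad \langle \varphi, \varphi \rangle = p^2 + q^2 .
\end{equation*}
Since $\sigma_3 \sigma_1 = \bigl( \begin{smallmatrix} 0 & 1 \\ -1 & 0 \end{smallmatrix} \bigr)$, the left-hand side equals $(q,-p)^t$. The scalar coefficient $\langle \varphi, \varphi \rangle /(2\lambda_1)$ is nonzero precisely when $p^2+q^2 \neq 0$ (recall $\lambda_1 \in \mathbb{C}\backslash\{0\}$), which is exactly the hypothesis of Theorem \ref{theorem-N-fold} that (\ref{lin-system}) has a unique solution in the case $N=1$, and which is in any event implicit in the statement since the right-hand side of (\ref{one-fold}) is otherwise undefined. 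Solving then gives
\begin{equation*}
\widetilde{\varphi} = (\widetilde{p},\widetilde{q})^t = \frac{2 \lambda_1}{p^2 + q^2}\, (q, -p)^t, \qquad \widetilde{p} = \frac{2\lambda_1 q}{p^2+q^2}, \quad \widetilde{q} = -\frac{2\lambda_1 p}{p^2+q^2}.
\end{equation*}

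Next I would substitute into the potential-update formula (\ref{N-fold}) with $N=1$: the first expression gives $\widetilde{u} = u + 2\widetilde{p}\,p = u + 4\lambda_1 pq/(p^2+q^2)$, which is (\ref{one-fold}), and for consistency one checks that the alternative expression $\widetilde{u} = u - 2\widetilde{q}\,q$ in (\ref{N-fold}) produces the same right-hand side. By Theorem \ref{theorem-N-fold}, $\widetilde{\varphi}$ then solves (\ref{3.2})--(\ref{3.3}) at $\lambda = \lambda_1$ with potential $\widetilde{u}$, and $\widetilde{u}$ is a new solution of the mKdV equation (\ref{mKdV}), which is the assertion.

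I do not expect a genuine obstacle: the only delicate point is the solvability of (\ref{lin-system}) for $N=1$, which reduces to the condition $p^2 + q^2 \neq 0$. If a self-contained argument were preferred over invoking Theorem \ref{theorem-N-fold}, one could instead verify directly that the first-degree matrix polynomial $T(\lambda) = \lambda I - \lambda_1 (p^2+q^2)^{-1} \bigl( \begin{smallmatrix} p^2 - q^2 & 2pq \\ 2pq & q^2-p^2 \end{smallmatrix} \bigr)$ satisfies the Darboux equations (\ref{darboux-1})--(\ref{darboux-2}) with the above $\widetilde{u}$, but routing through the already-established $N$-fold formula is shorter and avoids repeating the computations of Appendix A.
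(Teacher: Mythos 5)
Your proposal is correct and follows essentially the same route as the paper: the paper's proof also solves the linear system (\ref{lin-system}) with $N=1$ to get $\widetilde{p} = 2\lambda_1 q/(p^2+q^2)$, $\widetilde{q} = -2\lambda_1 p/(p^2+q^2)$ and substitutes into (\ref{N-fold}), invoking Theorem \ref{theorem-N-fold}. Your added remarks on the solvability condition $p^2+q^2\neq 0$ and the consistency of the two expressions in (\ref{N-fold}) are fine but not needed beyond what the paper records.
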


\begin{proof}
Solving the linear system (\ref{lin-system}) for $\widetilde{\varphi} = (\widetilde{p},\widetilde{q})^t$ yields
\begin{equation}
\label{symb-9}
\widetilde{p} = \frac{2 \lambda_1 q}{p^2+q^2}, \quad \widetilde{q} = \frac{-2 \lambda_1 p}{p^2+q^2}.
\end{equation}
Substituting (\ref{symb-9}) into (\ref{N-fold}) for $N = 1$ results in the transformation formula (\ref{one-fold}).
\end{proof}

\begin{proposition}\label{prop-two-fold}
Let $u$ be a smooth solution of the mKdV equation (\ref{mKdV}). Let $\varphi^{(k)} = (p_k,q_k)^t$ be a particular
smooth nonzero solution of system (\ref{3.2}) and (\ref{3.3}) with fixed $\lambda = \lambda_k \in \mathbb{C} \backslash \{0\}$
for $k = 1,2$ such that $\lambda_1 \neq \pm \lambda_2$. Then,
\begin{equation}
\label{two-fold}
\tilde{u} = u + \frac{4 (\lambda_1^2-\lambda_2^2)
\left[  \lambda_1 p_1 q_1 (p_2^2 + q_2^2) - \lambda_2 p_2 q_2 (p_1^2 + q_1^2) \right]}{
(\lambda_1^2 + \lambda_2^2) (p_1^2+q_1^2) (p_2^2+q_2^2)
- 2 \lambda_1 \lambda_2 \left[4 p_1 q_1 p_2 q_2 + (p_1^2-q_1^2)(p_2^2-q_2^2) \right]}
\end{equation}
is a new solution of the mKdV equation (\ref{mKdV}).
\end{proposition}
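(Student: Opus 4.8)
The plan is to specialize Theorem \ref{theorem-N-fold} to $N=2$ and carry out the linear algebra explicitly. Introduce the shorthand $r_k := p_k^2 + q_k^2 = \langle \varphi^{(k)},\varphi^{(k)} \rangle$ for $k=1,2$, $s := \langle \varphi^{(1)},\varphi^{(2)} \rangle = p_1 p_2 + q_1 q_2$, and $t := p_1 q_2 - p_2 q_1$. Since $\sigma_3 \sigma_1 \varphi^{(k)} = (q_k, -p_k)^t$, the algebraic system (\ref{lin-system}) for $N=2$ is a $2\times 2$ block-linear system for the two-vectors $\widetilde{\varphi}^{(1)},\widetilde{\varphi}^{(2)}$ with right-hand sides $(q_1,-p_1)^t$, $(q_2,-p_2)^t$ and scalar coefficient matrix
\begin{equation*}
M = \begin{pmatrix} \dfrac{r_1}{2\lambda_1} & \dfrac{s}{\lambda_1+\lambda_2} \\[2mm] \dfrac{s}{\lambda_1+\lambda_2} & \dfrac{r_2}{2\lambda_2} \end{pmatrix}, \qquad \det M = \frac{(\lambda_1+\lambda_2)^2 r_1 r_2 - 4\lambda_1\lambda_2 s^2}{4\lambda_1\lambda_2 (\lambda_1+\lambda_2)^2}.
\end{equation*}
The hypothesis that (\ref{lin-system}) is uniquely solvable is exactly $\det M \neq 0$.

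First I would solve by Cramer's rule, obtaining $\widetilde{p}_1 = (\det M)^{-1}(M_{22} q_1 - M_{12} q_2)$ and $\widetilde{p}_2 = (\det M)^{-1}(M_{11} q_2 - M_{21} q_1)$; substituting into (\ref{N-fold}) gives
\begin{equation*}
\widetilde{u} = u + 2(\widetilde{p}_1 p_1 + \widetilde{p}_2 p_2) = u + \frac{2}{\det M}\left[ M_{22}\, p_1 q_1 + M_{11}\, p_2 q_2 - M_{12}(p_1 q_2 + p_2 q_1)\right].
\end{equation*}
Then I would simplify the bracket and the determinant by three bilinear identities, each verified by direct expansion: $s(p_1 q_2 + p_2 q_1) = r_1 p_2 q_2 + r_2 p_1 q_1$, $r_1 r_2 = s^2 + t^2$, and $s^2 - t^2 = 4 p_1 q_1 p_2 q_2 + (p_1^2 - q_1^2)(p_2^2 - q_2^2)$. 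The first identity, together with $\tfrac{1}{2\lambda_2} - \tfrac{1}{\lambda_1+\lambda_2} = \tfrac{\lambda_1-\lambda_2}{2\lambda_2(\lambda_1+\lambda_2)}$ and the analogous relation with $\lambda_1 \leftrightarrow \lambda_2$, collapses the bracket to $\tfrac{\lambda_1-\lambda_2}{2\lambda_1\lambda_2(\lambda_1+\lambda_2)}\bigl(\lambda_1 r_2 p_1 q_1 - \lambda_2 r_1 p_2 q_2\bigr)$. The second and third identities show that $4\lambda_1\lambda_2(\lambda_1+\lambda_2)^2\det M = (\lambda_1^2+\lambda_2^2) r_1 r_2 - 2\lambda_1\lambda_2(s^2 - t^2)$ coincides with the denominator in (\ref{two-fold}). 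Substituting both expressions into the formula for $\widetilde{u}$ and cancelling the common factor $\lambda_1\lambda_2(\lambda_1+\lambda_2)$ between the numerator and $\det M$ leaves $\widetilde{u} = u + 4(\lambda_1^2-\lambda_2^2)\bigl[\lambda_1 p_1 q_1(p_2^2+q_2^2) - \lambda_2 p_2 q_2(p_1^2+q_1^2)\bigr]/D$ with $D$ the stated denominator, i.e.\ exactly (\ref{two-fold}). That $\widetilde{u}$ solves (\ref{mKdV}) is then immediate from Theorem \ref{theorem-N-fold}.

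The computation is entirely routine once the three bilinear identities are in hand, so there is no genuine obstacle; the only point requiring care is the bookkeeping of the scalar prefactors $1/(2\lambda_k)$ and $1/(\lambda_1+\lambda_2)$, so that the spurious factors of $\lambda_1\lambda_2(\lambda_1+\lambda_2)$ cancel cleanly between the numerator and $\det M$ and produce the polynomial form displayed in (\ref{two-fold}). An alternative but longer route would be to compose two one-fold transformations from Proposition \ref{prop-one-fold} in the style of Bianchi permutability, transforming $\varphi^{(2)}$ by the Darboux matrix $T(\lambda_2)$ after the first step; the direct $N=2$ route above is cleaner and stays inside the framework already set up in Theorem \ref{theorem-N-fold}.
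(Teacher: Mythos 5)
Your proposal is correct and follows essentially the same route as the paper: both specialize Theorem \ref{theorem-N-fold} to $N=2$, compute $\det$ of the $2\times 2$ coefficient matrix, solve for $\widetilde{p}_1,\widetilde{p}_2$ by Cramer's rule, and substitute into (\ref{N-fold}), simplifying with the same bilinear identities to reach (\ref{two-fold}). The bookkeeping of the factors $\lambda_1\lambda_2(\lambda_1+\lambda_2)$ in your reduction checks out, so nothing is missing.
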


\begin{proof}
The linear system (\ref{lin-system}) is generated by the matrix $A$ with the entries
\begin{equation}
\label{entry-A}
A_{jk} = \frac{\langle \varphi^{(j)}, \varphi^{(k)} \rangle}{\lambda_j + \lambda_k}, \quad 1 \leq j,k \leq N.
\end{equation}
For $N = 2$, we compute the determinant of this matrix as
\begin{eqnarray*}
{\rm det}(A) & = & \frac{1}{4 \lambda_1 \lambda_2 (\lambda_1+\lambda_2)^2} \left[
(\lambda_1 + \lambda_2)^2 (p_1^2+q_1^2)(p_2^2+q_2^2) - 4 \lambda_1 \lambda_2 (p_1p_2 +q_1q_2)^2\right] \\
& = & \frac{1}{4 \lambda_1 \lambda_2 (\lambda_1+\lambda_2)^2} \left[
(\lambda_1^2 + \lambda_2^2) (p_1^2+q_1^2)(p_2^2+q_2^2)
- 2 \lambda_1 \lambda_2 \left( 4 p_1 p_2 q_1 q_2 + (p_1^2-q_1^2)(p_2^2 - q_2^2) \right) \right].
\end{eqnarray*}
Solving the linear system (\ref{lin-system}) with Cramer's rule yields the components
$$
\widetilde{p}_1 = \frac{(\lambda_1 + \lambda_2) q_1 (p_2^2+q_2^2)-2 \lambda_2 q_2 (p_1p_2+q_1q_2)}{2 \lambda_2 (\lambda_1+\lambda_2) {\rm det}(A)}
$$
and
$$
\widetilde{p}_2 = \frac{(\lambda_1 + \lambda_2) q_2 (p_1^2+q_1^2)-2 \lambda_1 q_1 (p_1p_2+q_1q_2)}{2 \lambda_1 (\lambda_1+\lambda_2) {\rm det}(A)}.
$$
Substituting these formulas to the representation (\ref{N-fold}) with $N = 2$ and reordering the similar terms result in
the transformation formula (\ref{two-fold}).
\end{proof}

\section{The ``rogue" {\em dn}-periodic wave}

Here we apply the one-fold Darboux transformation (\ref{one-fold}) to the Jacobian elliptic function
{\em dn} in (\ref{elliptic-dn}) in order to obtain the ``rogue" {\em dn}-periodic wave. We write the ``rogue" wave in
commas, because the corresponding solution is a nonlinear superposition of an algebraically decaying soliton and the {\em dn}-periodic
wave, hence the maximal amplitude is brought by the algebraic soliton from infinity.
The proper rogue {\em dn}-periodic wave does not exist in the mKdV equation (\ref{mKdV}) because
the {\em dn}-periodic wave is modulationally stable. We note however that very similar solutions to the NLS
equation define a proper rogue {\em dn}-periodic wave, as is shown numerically in \cite{Kedziora}.

Let $u$ be the {\em dn} periodic wave (\ref{elliptic-dn}), whereas $\varphi = (\varphi_1,\varphi_2)^t$ be the periodic
eigenfunction of the linear system (\ref{3.2}) and (\ref{3.3}) with $\lambda = \lambda_1$
defined by Propositions \ref{prop-AKNS} and \ref{prop-AKNS-time}.
Since the connection formulas (\ref{AKNS-solution}) are satisfied for every $t \in \mathbb{R}$,
substituting $p = \varphi_1$ and $q = \varphi_2$ into the one-fold Darboux transformation (\ref{one-fold}) yields
another solution of the mKdV equation in the form
$$
\tilde{u} = u + \frac{4 \lambda_1 \varphi_1 \varphi_2}{\varphi_1^2 + \varphi_2^2} = \frac{E_0}{u},
$$
where $E_0 = \pm \sqrt{1 - k^2}$. However, since
$$
{\rm dn}(x + K(k);k) = \frac{\sqrt{1-k^2}}{{\rm dn}(x;k)},
$$
the new solution $\tilde{u}$ to the mKdV equation (\ref{mKdV}) is obtained trivially by the spatial translation
of the {\em dn} periodic wave on the half-period $\frac{1}{2} L = K(k)$. This computation explains why
we need to use the second non-periodic solution $\psi$ instead of the periodic eigenfunction $\varphi$.

Let $u$ be the {\em dn} periodic wave (\ref{elliptic-dn}), whereas $\psi = (\psi_1,\psi_2)^t$
be the non-periodic solution to the linear system (\ref{3.2}) and (\ref{3.3}) with $\lambda = \lambda_1$
defined by Propositions \ref{prop-Second-solution} and \ref{prop-Second-solution-time}.
Recall that there exist two choices for $\lambda_1$ in (\ref{dn-branch-points}).
However, for the choice $\lambda_1 = \lambda_-(k)$, we have $E_0 = \sqrt{1-k^2}$ and $u(x)^2 - E_0 = 0$
for some values of $x$ in $[-K(k),K(k)]$,
therefore, the assumption of Proposition \ref{prop-Second-solution} is not satisfied.
For the choice $\lambda_1 = \lambda_+(k)$, we have $E_0 = -\sqrt{1-k^2}$ and $u(x)^2 - E_0 > 0$ for every $x$,
therefore, the assumption of Proposition \ref{prop-Second-solution} is satisfied. Substituting
$p = \psi_1$ and $q = \psi_2$ given by (\ref{Wronskian-repr}) into the one-fold Darboux transformation (\ref{one-fold})
with $\lambda_1 = \lambda_+(k)$ and $E_0 = -\sqrt{1-k^2}$ yields another solution of the mKdV equation in the form
$$
\tilde{u} = u + \frac{4 \lambda_1 \psi_1 \psi_2}{\psi_1^2 + \psi_2^2} =
u + \frac{4 \lambda_1 \varphi_1 \varphi_2 (\theta^2 -1)}{(\varphi_1^2+\varphi_2^2) (1 + \theta^2) - 2 (\varphi_1^2-\varphi_2^2) \theta}.
$$
By using relations (\ref{AKNS-solution}) again, we finally write the new solution in the form
\begin{equation}
\label{dn-rogue}
u_{\rm dn-rogue} = u_{\rm dn} + \frac{(1-\theta_{\rm dn}^2) (u_{\rm dn}^2+\sqrt{1-k^2})}{
(1+\theta_{\rm dn}^2)u_{\rm dn} - \lambda_1^{-1} \theta_{\rm dn} u_{\rm dn}'}
\end{equation}
where
\begin{equation}
\label{dn-rogue-theta}
\theta_{\rm dn}(x,t) = -4 \lambda_1 (u_{\rm dn}(x-ct)^2+\sqrt{1-k^2})
\left[ \int_0^{x-ct} \frac{u_{\rm dn}(y)^2}{(u_{\rm dn}(y)^2+ \sqrt{1-k^2})^2}dy - t \right].
\end{equation}
We refer to the exact solution (\ref{dn-rogue})--(\ref{dn-rogue-theta}) as the ``rogue"
{\em dn} periodic wave of the mKdV equation.

If $k = 0$, then $u_{\rm dn}(x,t) = 1$, $\lambda_1 = 1$, $c = 2$, $\theta_{\rm dn}(x,t) = -2(x-6t)$, and
$$
k = 0 : \quad u_{\rm dn-rogue}(x,t) = -1 + \frac{4}{1 + 4 (x-6t)^2}.
$$
Although this expression is an analogue of the rogue wave of the NLS on the constant wave background \cite{Akh,Choudhury},
it corresponds to the algebraically decaying soliton of the mKdV \cite{GrimshawPel}.

If $k = 1$, then $u_{\rm dn}(x,t) = {\rm sech}(x-t)$, $\lambda_1 = \frac{1}{2}$, $c = 1$,
$$
\theta_{\rm dn}(x,t) = -(x - 3t) {\rm sech}^2(x-t) - \tanh(x-t),
$$
and
$$
k = 1 : \quad u_{\rm dn-rogue}(x,t) = 2 {\rm sech}(x-t) \frac{1 - (x-3t) \tanh(x-t)}{1 + (x-3t)^2 {\rm sech}^2(x-t)}.
$$
in agreement with the two-soliton solutions of the mKdV for two nearly identical solitons \cite{PelSh,PelSl}.

Next, we show that for every $k \in [0,1)$, there exists a particular line $x = c_* t$ with $c_* > c$ such that
$\theta_{\rm dn}(x,t)$ given by (\ref{dn-rogue-theta}) remains bounded as $|x| + |t| \to \infty$.
This value of $c_*$ gives the speed of the algebraically decaying soliton propagating on the {\em dn}-periodic
wave background. For instance, if $k = 0$, then $c_* = 6 > 2 = c$.

In order to show the claim above, we inspect the expression
$$
\int_0^{x-ct} \frac{u_{\rm dn}(y)^2}{(u_{\rm dn}(y)^2+ \sqrt{1-k^2})^2}dy - t.
$$
Since the integrand is a positive $L=2K(k)$-periodic function with a positive mean value denoted by $I(k)$,
then the expression can be written as
$$
I(k) (x-ct) - t + \mbox{\rm a periodic function of $(x,t)$}.
$$
Therefore, $\theta_{\rm dn}(x,t)$ is bounded at $x = c_* t$, where $c_* = c + [I(k)]^{-1} > c$.

Except for the line $x = c_* t$, the function $\theta_{\rm dn}(x,t)$ given by (\ref{dn-rogue-theta})
grows linearly in $x$ and $t$ as $|x| + |t| \to \infty$ for every $k \in [0,1)$. Hence
the representation (\ref{dn-rogue}) yields asymptotic behavior
\begin{equation*}
u_{\rm dn-rogue}(x,t) \sim - \frac{\sqrt{1-k^2}}{{\rm dn}(x-ct;k)} = -{\rm dn}(x-ct+K(k);k) = -u_{\rm dn}(x-ct+K(k)).
\end{equation*}
The maximal value of $u_{\rm dn-rogue}(x,t)$ as $|x| + |t| \to \infty$ except for the line $x = c_* t$
coincides with the maximal value of $u_{\rm dn}(x,t) = {\rm dn}(x-ct;k)$.

For $t = 0$, $u_{\rm dn}(x,0)$ is even in $x$, $\theta_{\rm dn}(x,0)$ is odd in $x$, hence
$u_{\rm dn-rogue}(x,0)$ is even in $x$. The maximal value of $u_{\rm dn}(x,0)$ occurs at $u_{\rm dn}(0,0) = 1$.
Since $u_{\rm dn-rogue}(x,0)$ is even in $x$, then $x = 0$ is an extremal
point of $u_{\rm dn-rogue}(x,0)$. Moreover, $\partial_x^2 u_{\rm dn-rogue}(0,0) < 0$, which follows
from the expansions $u_{\rm dn}(x,0) = 1 - \frac{1}{2} k^2 x^2 + \mathcal{O}(x^4)$,
$\theta_{\rm dn}(x,0) = -4 \lambda_1 (1+\sqrt{1-k^2})^{-1} x + \mathcal{O}(x^3)$, and
$$
u_{\rm dn-rogue}(x,0) = 2 + \sqrt{1-k^2} - \left[ 8 - 3 k^2 + 8 \sqrt{1-k^2} - \frac{1}{2} k^2 \sqrt{1-k^2} \right] x^2 + \mathcal{O}(x^4).
$$
Hence $x = 0$ is the point of maximum of $u_{\rm dn-rogue}(x,0)$. Defining the magnification number as
$$
M_{\rm dn}(k) = \frac{u_{\rm dn-rogue}(0,0)}{\max\limits_{x \in [-K(k),K(k)]} u_{\rm dn}(x,0)} = 2 + \sqrt{1-k^2},
$$
we obtain the expression in (\ref{magnification-factor}). The value $M_{\rm dn}(k)$ corresponds to the amplitude
of the algebraically decaying soliton propagating on the background of the {\em dn}-periodic wave.

\section{The rogue {\em cn}-periodic wave}

Here we apply the one-fold and two-fold Darboux transformations (\ref{one-fold}) and (\ref{two-fold})
to the Jacobian elliptic function {\em cn} in (\ref{elliptic-cn}) in order to obtain the rogue {\em cn}-periodic wave.
This is a proper rogue {\em cn}-periodic wave because
the {\em cn}-periodic wave is modulationally unstable.

Let $u$ be the {\em cn} periodic wave (\ref{elliptic-cn}), whereas $\varphi = (\varphi_1,\varphi_2)^t$ be the periodic solution
to the linear system (\ref{3.2}) and (\ref{3.3}) with $\lambda = \lambda_1$
defined by Propositions \ref{prop-AKNS} and \ref{prop-AKNS-time}. Without loss of generality,
we choose $\lambda_1 = \lambda_I(k)$, where $\lambda_I(k)$ is given by (\ref{cn-branch-points}),
so that $E_0 = - i k \sqrt{1-k^2}$. Since the periodic solution $\varphi$ is complex,
the one-fold Darboux transformation (\ref{one-fold}) produces a complex-valued solution to the mKdV,
hence we should use the two-fold Darboux transformation (\ref{two-fold}).

By virtue of relations (\ref{AKNS-solution}), substituting $(p_1,q_1) = (\varphi_1,\varphi_2)$ with $\lambda_1 = \lambda_I$
and $(p_2,q_2) = (\overline{\varphi}_1,\overline{\varphi}_2)$ with $\lambda_2 = \overline{\lambda}_I$
to the two-fold Darboux transformation (\ref{two-fold}) yields
another solution of the mKdV equation in the form
$$
\tilde{u} = u + \frac{4 k^2 (1-k^2) u}{(2k^2-1) u^2 - u^4 - k^2(1-k^2) - (u')^2} = -u,
$$
where the first-order invariant in (\ref{ode}) is used in the second identity with $c = 2k^2-1$ and $d = k^2(1-k^2)$.
Thus, the new solution $\tilde{u}$ in the two-fold transformation (\ref{two-fold}) is trivially related
to the previous solution $u$ if the functions $(p_1,q_1)$ and $(p_2,q_2)$ are periodic.

Let us now consider the non-periodic solution $\psi = (\psi_1,\psi_2)^t$
to the linear system (\ref{3.2}) and (\ref{3.3}) with $\lambda = \lambda_I$.
The assumption of Proposition \ref{prop-Second-solution} is satisfied
because $E_0 = - i k \sqrt{1-k^2} \neq 0$ for $k \in (0,1)$ and $u(x)^2 - E_0 \neq 0$ for every $x$. Therefore,
the non-periodic solution $\psi$ in  Propositions \ref{prop-Second-solution} and \ref{prop-Second-solution-time}
is well-defined. Substituting $(p_1,q_1) = (\psi_1,\psi_2)$ with $\lambda_1 = \lambda_I$
and $(p_2,q_2) = (\overline{\psi}_1,\overline{\psi}_2)$ with $\lambda_2 = \overline{\lambda}_I$
into the two-fold Darboux transformation (\ref{two-fold}) yields another solution of the mKdV in the form
\begin{eqnarray*}
\tilde{u} = u + \frac{4 (\lambda_I^2-\overline{\lambda}_I^2) \left[
\lambda_I \psi_1 \psi_2 (\overline{\psi}_1^2 + \overline{\psi}_2^2) -
\overline{\lambda}_I \overline{\psi}_1 \overline{\psi}_2 (\psi_1^2 + \psi_2^2) \right]}{
(\lambda_I^2 + \overline{\lambda}_I^2) |\psi_1^2+\psi_2^2|^2
- 2 |\lambda_I|^2 \left[4 |\psi_1|^2 |\psi_2|^2 + |\psi_1^2-\psi_2^2|^2  \right]}
= u + \frac{F_1}{F_2},
\end{eqnarray*}
where
\begin{eqnarray*}
F_1 & = & 8 {\rm Im}(\lambda_I^2) {\rm Im}\left[ \lambda_I \varphi_1 \varphi_2 (1-\theta^2) [(1+\overline{\theta}^2)(\overline{\varphi}_1^2+\overline{\varphi}_2^2) -
2 \overline{\theta} (\overline{\varphi}_1^2-\overline{\varphi}_2^2)] \right], \\
F_2 & = & {\rm Re}(\lambda_I^2) |(1+\theta^2)(\varphi_1^2+\varphi_2^2) - 2 \theta (\varphi_1^2-\varphi_2^2)|^2 \\
& \phantom{t} & - |\lambda_I|^2 \left( 4 |1-\theta^2|^2 |\varphi_1|^2 |\varphi_2|^2 + |(1+\theta^2)(\varphi_1^2-\varphi_2^2)
- 2 \theta (\varphi_1^2+\varphi_2^2)|^2 \right).
\end{eqnarray*}
By using relations (\ref{AKNS-solution}) and (\ref{cn-branch-points}), we finally write the new solution in the form
\begin{equation}
\label{cn-rogue}
u_{\rm cn-rogue} = u_{\rm cn} + \frac{G_1}{G_2},
\end{equation}
where
\begin{eqnarray*}
G_1 & = & 4 k \sqrt{1-k^2} {\rm Im}\left[ (u_{\rm cn}^2+i k \sqrt{1-k^2}) (1-\theta_{\rm cn}^2)
[(1+\overline{\theta}_{\rm cn}^2) u_{\rm cn} - \overline{\lambda}_I^{-1} \overline{\theta}_{\rm cn} u_{\rm cn}'] \right], \\
G_2 & = & (1-2k^2) |(1+\theta_{\rm cn}^2) u_{\rm cn} - \lambda_I^{-1} \theta_{\rm cn} u_{\rm cn}'|^2 \\
& \phantom{t} & + |1-\theta_{\rm cn}^2|^2 \left[ u_{\rm cn}^4 + k^2(1-k^2) \right] +
|(1+\theta_{\rm cn}^2) (2\lambda_I)^{-1} u_{\rm cn}' - 2 \theta_{\rm cn} u_{\rm cn}|^2,
\end{eqnarray*}
and
\begin{equation}
\label{cn-rogue-theta}
\theta_{\rm cn}(x,t) = -4 \lambda_I (u_{\rm cn}(x-ct)^2+i k \sqrt{1-k^2})
\left[ \int_0^{x-ct} \frac{u_{\rm cn}(y)^2}{(u_{\rm cn}(y)^2+ i k\sqrt{1-k^2})^2}dy - t\right].
\end{equation}
We refer to the exact solution (\ref{cn-rogue})--(\ref{cn-rogue-theta})
as the rogue {\em cn} periodic wave of the mKdV equation.

As $k \to 0$, then $u_{\rm cn}(x,t) \to 0$, $\lambda_I \to i/2$, $\theta_{\rm cn}(x,t) \to 0$, and
$u_{\rm cn-rogue}(x,t) \to 0$. Although the limit is zero, one can derive asymptotic expansions at the
order of $\mathcal{O}(k)$ which recovers the rogue wave of the NLS equation (\ref{nls}),
according to the asymptotic transformation of the focusing mKdV to the focusing NLS in the small-amplitude
limit \cite{PelPel}. The rouge {\em cn}-periodic wave generalizes the rogue wave (\ref{rogue-basic})
on the constant wave background.

As $k \to 1$, then $u_{\rm cn}(x,t) \to {\rm sech}(x-t)$, $\lambda_I \to 1/2$,
and it may first seem that the second term in (\ref{cn-rogue}) vanishes.
However, $G_1 = \mathcal{O}(1-k^2)$ and $G_2 = \mathcal{O}(1-k^2)$,
hence a non-trivial limit exists to yield a three-soliton solution to the mKdV with three nearly identical
solitons \cite{PelSl}.

Let us inspect the expression
$$
\int_0^{x-ct} \frac{u_{\rm cn}(y)^2}{(u_{\rm cn}(y)^2+ i k\sqrt{1-k^2})^2}dy - t =
\int_0^{x-ct} \frac{u_{\rm cn}(y)^2 (u_{\rm cn}(y)^2- i k \sqrt{1-k^2})^2}{(u_{\rm cn}(y)^4+ k^2 (1-k^2))^2}dy - t.
$$
For every $k \in (0,1)$, the imaginary part in the integrand is a negative $L=4K(k)$-periodic function with a negative
mean value. It is only bounded on the line $x = ct$, however, the real part of the last term in the expression grows
linearly in $t$. Therefore, for every $k \in (0,1)$, $|\theta_{\rm cn}(x,t)|$ grows linearly in $x$ and $t$ as
$|x| + |t| \to \infty$ everywhere on the $(x,t)$ plane. Hence the representation (\ref{cn-rogue})
yields the asymptotic behavior
\begin{eqnarray*}
u_{\rm cn-rogue}(x,t) & \sim & u_{\rm cn}(x,t) + \frac{4 k^2 (1-k^2) u_{\rm cn}(x,t)}{
(2k^2-1) u_{\rm cn}(x,t)^2 - (\partial_x u_{\rm cn}(x,t))^2 - u_{\rm cn}(x,t)^4-k^2(1-k^2)}\\
&  = & -u_{\rm cn}(x,t),
\end{eqnarray*}
where the first-order invariant in (\ref{ode}) is used for the last identity
with $c = 2k^2-1$ and $d = k^2(1-k^2)$.
The maximal value of $u_{\rm cn-rogue}(x,t)$ as $|x| + |t| \to \infty$
coincides with the maximal value of $u_{\rm cn}(x,t) = k {\rm cn}(x-ct;k)$.

For $t = 0$, $u_{\rm cn}(x,0)$ is even in $x$, $\theta_{\rm cn}(x,0)$ is odd in $x$, hence
$u_{\rm cn-rogue}(x,0)$ is even in $x$. The maximal value of $u_{\rm cn}(x,0)$ occurs at $u_{\rm cn}(0,0) = k$.
Since $u_{\rm cn-rogue}(x,0)$ is even in $x$, then $x = 0$ is an extremal
point of $u_{\rm cn-rogue}(x,0)$. Moreover, $\partial_x^2 u_{\rm cn-rogue}(0,0) < 0$, which follows
from the expansions $u_{\rm cn}(x,0) = k - \frac{1}{2} k x^2 + \mathcal{O}(x^4)$,
$\theta_{\rm cn}(x,0) = -4 \lambda_I (k^2 - i k \sqrt{1-k^2}) x + \mathcal{O}(x^3)$, and
$$
u_{\rm cn-rogue}(x,0) = 3k - \left[ \frac{3}{2} k + 16 k^3 \right] x^2 + \mathcal{O}(x^4).
$$
Hence $x = 0$ is the point of maximum of $u_{\rm cn-rogue}(x,0)$. Defining the magnification number as
$$
M_{\rm cn}(k) = \frac{u_{\rm cn-rogue}(0,0)}{\max\limits_{x \in [-2K(k),2K(k)]} |u_{\rm cn}(x,0)|} = 3,
$$
we obtain the expression in (\ref{magnification-factor}). The magnification factor is independent of the
amplitude of the {\em cn}-periodic wave.

\appendix

\section{Proof of $N$-fold Darboux transformation}
\label{appendix-a}

Here we prove Theorem \ref{theorem-N-fold} with explicit algebraic computations.
The Darboux transformation matrix $T(\lambda)$ in (\ref{Darboux}) is sought in the following explicit form:
\begin{equation}
\label{T-lambda}
T(\lambda) = I + \sum_{k=1}^N \frac{1}{\lambda - \lambda_k} T_k, \quad T_k = \widetilde{\varphi}^{(k)} \otimes (\varphi^{(k)})^t \sigma_1 \sigma_3,
\end{equation}
where the sign $\otimes$ denotes the outer vector product and $I$ denotes an identity $2 \times 2$ matrix.

We note that $\varphi^{(k)} \in {\rm ker}(T_k)$ and $\widetilde{\varphi}^{(k)} \in {\rm ran}(T_k)$.
It is assumed in Theorem \ref{theorem-N-fold} that $\varphi^{(k)} = (p_k,q_k)^t$, $1 \leq k \leq N$ is a particular
smooth nonzero solution to system (\ref{3.2}) and (\ref{3.3}) with fixed $\lambda = \lambda_k \in \mathbb{C} \backslash \{0\}$
satisfying $\lambda_k \neq \pm \lambda_j$ for every $k \neq j$, whereas
$\{\widetilde{\varphi}^{(k)}\}_{1 \leq k \leq N}$ is a unique
solution of the linear algebraic system (\ref{lin-system}). Deeper in the proof,
we will be able to show that $\widetilde{\varphi}^{(k)} = (\widetilde{p}_k,\widetilde{q}_k)^t$, $1 \leq k \leq N$
is a particular solution to system (\ref{3.2}) and (\ref{3.3}) with $\lambda = \lambda_k$ and new potential $\widetilde{u}$
given by the transformation formula (\ref{N-fold}).

First, let us show that the two lines in the definition (\ref{N-fold}) are identical. Let us define
entries of the matrix $A$ by (\ref{entry-A}). Each entry is finite, moreover, $A_{jk} = A_{kj}$.
The linear system (\ref{lin-system}) can be split into two parts as follows
\begin{equation}
\label{lin-system-A}
\sum_{k=1}^N \frac{\langle \varphi^{(j)}, \varphi^{(k)} \rangle}{\lambda_j + \lambda_k} \widetilde{p}_k = q_j, \quad
\sum_{k=1}^N \frac{\langle \varphi^{(j)}, \varphi^{(k)} \rangle}{\lambda_j + \lambda_k} \widetilde{q}_k = -p_j.
\end{equation}
Thanks to the symmetry of $A$, we obtain from (\ref{lin-system-A}):
\begin{equation}
\label{symm1}
\sum_{j=1}^N \widetilde{q}_j q_j =  \sum_{j=1}^N  \sum_{k=1}^N \frac{\langle \varphi^{(j)}, \varphi^{(k)} \rangle}{\lambda_j + \lambda_k}
\widetilde{p}_k \widetilde{q}_j = \sum_{j=1}^N  \sum_{k=1}^N \frac{\langle \varphi^{(j)}, \varphi^{(k)} \rangle}{\lambda_j + \lambda_k}
\widetilde{p}_j \widetilde{q}_k = -\sum_{j=1}^N \widetilde{p}_j p_j.
\end{equation}
This proves that the two lines in the definition (\ref{N-fold}) are identical. For further use, let us also
derive another relation from the system (\ref{lin-system-A}):
\begin{eqnarray}
\nonumber
\sum_{j=1}^N \lambda_j \widetilde{q}_j q_j - \sum_{j=1}^N \lambda_j \widetilde{p}_j p_j & = &
\sum_{j=1}^N  \sum_{k=1}^N \langle \varphi^{(j)}, \varphi^{(k)} \rangle \widetilde{p}_k \widetilde{q}_j \\
\nonumber & = & \left( \sum_{j=1}^N \widetilde{p}_j p_j \right)
\left( \sum_{k=1}^N p_k \widetilde{q}_k \right) + \left( \sum_{j=1}^N \widetilde{q}_j q_j \right)
\left( \sum_{k=1}^N \widetilde{p}_k q_k \right) \\
\label{symm2} & = & \frac{1}{2} (\widetilde{u}-u)
\left( \sum_{k=1}^N p_k \widetilde{q}_k  - \sum_{k=1}^N \widetilde{p}_k q_k \right).
\end{eqnarray}

Next, we show validity of the Darboux equation (\ref{darboux-1}) under the transformation formula (\ref{T-lambda}).
Substituting (\ref{T-lambda}) to (\ref{darboux-1}) yields the following equations at the simple poles
\begin{equation}
\label{T-k}
\partial_x T_k + T_k U(\lambda_k,u) = U(\lambda_k,\widetilde{u}) T_k, \quad 1 \leq k \leq N,
\end{equation}
and the following equation at the constant term
\begin{equation}
\label{T-0}
\widetilde{u} \sigma_3 \sigma_1 = u \sigma_3 \sigma_1  + \sum_{k=1}^N T_k \sigma_3 - \sigma_3 T_k.
\end{equation}
Equation (\ref{T-0}) yields (\ref{N-fold}) due to representation (\ref{T-lambda}).

Let us show that equations (\ref{T-k}) are satisfied if $\varphi^{(k)}$ solves (\ref{3.2}) with
$\lambda = \lambda_k$ and $u$, whereas $\widetilde{\varphi}^{(k)}$ solves (\ref{3.2}) with
$\lambda = \lambda_k$ and $\widetilde{u}$.
Recall that $\sigma_1 \sigma_3 = -\sigma_3 \sigma_1$ and $\sigma_1 \sigma_1 = \sigma_3 \sigma_3 = I$.
Substituting (\ref{T-lambda}) to both sides of (\ref{T-k}) yields
\begin{eqnarray*}
& \phantom{t} & \left[ \partial_x \widetilde{\varphi}^{(k)} \right] \otimes (\varphi^{(k)})^t \sigma_1 \sigma_3 +
\widetilde{\varphi}^{(k)} \otimes \left[ \partial_x (\varphi^{(k)})^t  \right] \sigma_1 \sigma_3
+ \widetilde{\varphi}^{(k)} \otimes (\varphi^{(k)})^t \sigma_1 \sigma_3 U(\lambda_1,u) \\
& \phantom{t} &
= \left[ \partial_x \widetilde{\varphi}^{(k)} \right] \otimes (\varphi^{(k)})^t \sigma_1 \sigma_3 +
\widetilde{\varphi}^{(k)} \otimes \left[ \partial_x (\varphi^{(k)})^t  \right] \sigma_1 \sigma_3
- \widetilde{\varphi}^{(k)} \otimes (\varphi^{(k)})^t U(\lambda_1,u)^t \sigma_1 \sigma_3 \\
& \phantom{t} & =
\left[ \partial_x \widetilde{\varphi}^{(k)} \right] \otimes (\varphi^{(k)})^t \sigma_1 \sigma_3
\end{eqnarray*}
and
$$
U(\lambda_1,\widetilde{u}) \widetilde{\varphi}^{(k)} \otimes (\varphi^{(k)})^t \sigma_1 \sigma_3 =
\left[ \partial_x \widetilde{\varphi}^{(k)} \right] \otimes (\varphi^{(k)})^t \sigma_1 \sigma_3,
$$
hence equation (\ref{T-k}) is satisfied.

We show now that if $\{ \varphi^{(k)} \}_{1 \leq k \leq N}$ solve (\ref{3.2}) with
$\{\lambda_k\}_{1 \leq k \leq N}$ and $u$ and $\{ \widetilde{\varphi}^{(k)} \}_{1 \leq k \leq N}$ are obtained from
the linear algebraic system (\ref{lin-system}), then
$\{ \widetilde{\varphi}^{(k)} \}_{1 \leq k \leq N}$ solve (\ref{3.2})  with
$\{\lambda_k\}_{1 \leq k \leq N}$  and $\widetilde{u}$. We note from the linear system (\ref{3.2})
that
\begin{equation}
\label{x-derivative-eq}
\partial_x \langle \varphi^{(j)}, \varphi^{(k)} \rangle = (\lambda_j + \lambda_k) \langle \varphi^{(j)}, \sigma_3 \varphi^{(k)} \rangle.
\end{equation}
Differentiating (\ref{lin-system}) in $x$ and substituting (\ref{operator-U}) and (\ref{x-derivative-eq}) yield
\begin{eqnarray}
\nonumber
& \phantom{t} & \sum_{k=1}^N \frac{\langle \varphi^{(j)}, \varphi^{(k)} \rangle}{\lambda_j + \lambda_k} \left[ \partial_x
\widetilde{\varphi}^{(k)} - \lambda_k \sigma_3 \widetilde{\varphi}^{(k)} - \widetilde{u} \sigma_3 \sigma_1 \widetilde{\varphi}^{(k)} \right]  \\
& = & (\widetilde{u}-u) \varphi^{(j)} - \sum_{k=1}^N \left[ \langle \varphi^{(j)}, \sigma_3 \varphi^{(k)} \rangle \widetilde{\varphi}^{(k)}
+ \langle \varphi^{(j)}, \varphi^{(k)} \rangle \sigma_3 \widetilde{\varphi}^{(k)} \right] = 0, \label{x-relation-cancel}
\end{eqnarray}
where the last equality is due to the transformation formula (\ref{N-fold}). Thus, if the linear
system (\ref{lin-system}) is assumed to admit a unique solution, then
$\widetilde{\varphi}^{(k)}$ solves (\ref{3.2})  with
$\lambda = \lambda_k$  and $\widetilde{u}$.

It remains to show validity of the Darboux equation (\ref{darboux-2}) under the transformation formula (\ref{T-lambda}).
Substituting (\ref{T-lambda}) to (\ref{darboux-2})
yields the following equations at the simple poles
\begin{equation}
\label{T-k-time}
\partial_t T_k + T_k V(\lambda_k,u) = V(\lambda_k,\widetilde{u}) T_k, \quad 1 \leq k \leq N,
\end{equation}
the same equation (\ref{T-0}) at $\lambda^2$ and the following two equations
at $\lambda^1$ and $\lambda^0$ respectively:
\begin{equation}
\label{T-1}
\widetilde{u}^2 \sigma_3 + \widetilde{u}_x \sigma_1 + 2 \widetilde{u} \sum_{k=1}^N \sigma_3 \sigma_1 T_k
= u^2 \sigma_3 + u_x \sigma_1 + 2 u \sum_{k=1}^N T_k \sigma_3 \sigma_1 + 2 \sum_{k=1}^N \lambda_k ( T_k \sigma_3 - \sigma_3 T_k )
\end{equation}
and
\begin{eqnarray}
\nonumber
 (2 \widetilde{u}^3 + \widetilde{u}_{xx}) \sigma_3 \sigma_1 + 2 \widetilde{u}^2 \sum_{k=1}^N \sigma_3 T_k
+ 2 \widetilde{u}_x \sum_{k=1}^N \sigma_1 T_k + 4 \widetilde{u} \sum_{k=1}^N \lambda_k \sigma_3 \sigma_1 T_k + 4 \sum_{k=1}^N \lambda_k^2
\sigma_3 T_k & \phantom{t} &\\
= (2 u^3 + u_{xx}) \sigma_3 \sigma_1 + 2 u^2 \sum_{k=1}^N T_k \sigma_3
+ 2 u_x \sum_{k=1}^N T_k \sigma_1 + 4 u \sum_{k=1}^N \lambda_k T_k \sigma_3 \sigma_1 +
4 \sum_{k=1}^N \lambda_k^2 T_k \sigma_3. & \phantom{t} &
\label{T-2}
\end{eqnarray}

Let us show that equations (\ref{T-k-time}) are satisfied if $\varphi^{(k)}$ solves (\ref{3.3}) with
$\lambda = \lambda_k$ and $u$, whereas $\widetilde{\varphi}^{(k)}$ solves (\ref{3.3}) with
$\lambda = \lambda_k$ and $\widetilde{u}$. Substituting (\ref{T-lambda}) to both sides of (\ref{T-k-time}) yields
\begin{eqnarray*}
& \phantom{t} & \left[ \partial_t \widetilde{\varphi}^{(k)}  \right] \otimes (\varphi^{(k)})^t \sigma_1 \sigma_3 +
\widetilde{\varphi}^{(k)} \otimes \left[ \partial_t (\varphi^{(k)})^t  \right] \sigma_1 \sigma_3
+ \widetilde{\varphi}^{(k)} \otimes (\varphi^{(k)})^t \sigma_1 \sigma_3 V(\lambda_1,u) \\
& \phantom{t} &
= \left[ \partial_t \widetilde{\varphi}^{(k)} \right] \otimes (\varphi^{(k)})^t \sigma_1 \sigma_3 +
\widetilde{\varphi}^{(k)} \otimes \left[ \partial_t (\varphi^{(k)})^t  \right] \sigma_1 \sigma_3
- \widetilde{\varphi}^{(k)} \otimes (\varphi^{(k)})^t V(\lambda_1,u)^t \sigma_1 \sigma_3 \\
& \phantom{t} & =
\left[ \partial_t \widetilde{\varphi}^{(k)} \right] \otimes (\varphi^{(k)})^t \sigma_1 \sigma_3
\end{eqnarray*}
and
$$
V(\lambda_1,\widetilde{u}) \widetilde{\varphi}^{(k)} \otimes (\varphi^{(k)})^t \sigma_1 \sigma_3 =
\left[ \partial_t \widetilde{\varphi}^{(k)} \right] \otimes (\varphi^{(k)})^t \sigma_1 \sigma_3,
$$
hence equation (\ref{T-k-time}) is satisfied.

In order to show the validity of equation (\ref{T-1}), we differentiate (\ref{T-0}) in $x$
and substitute (\ref{T-k}) to obtain
\begin{equation}
\label{T-25}
(\widetilde{u}_x-u_x)\sigma_1 = 2\sum\limits_{k=1}^N\lambda_k(T_k\sigma_3-\sigma_3T_k)
+ \widetilde{u}\sum\limits_{k=1}^N(\sigma_1T_k\sigma_3-\sigma_3\sigma_1T_k)
+ u \sum\limits_{k=1}^N(\sigma_3T_k\sigma_1+T_k\sigma_3\sigma_1).
\end{equation}
Substituting (\ref{T-25}) into (\ref{T-1}) yields a simplified form of the equation:
\begin{equation}
\label{T-101}
(\widetilde{u}^2 - u^2) \sigma_3 + \widetilde{u} \sum_{k=1}^N \sigma_1 T_k \sigma_3 + \sigma_3 \sigma_1 T_k
+ u \sum_{k=1}^N \sigma_3 T_k \sigma_1 - T_k \sigma_3 \sigma_1 = 0.
\end{equation}
Further substituting (\ref{T-0}) into (\ref{T-101}) yields
\begin{equation}
\label{T-100}
\sum\limits_{k=1}^N(\sigma_1T_k\sigma_3+\sigma_3\sigma_1T_k+T_k\sigma_3\sigma_1-\sigma_3T_k\sigma_1) = 0.
\end{equation}
The validity of equation (\ref{T-100}) is satisfied thanks again to equation (\ref{T-0}):
\begin{equation}
\label{TTT}
(\widetilde{u}-u) \sigma_3 = \sum_{k=1}^N (T_k \sigma_3 \sigma_1 - \sigma_3 T_k \sigma_1 ), \quad
(u-\widetilde{u}) \sigma_3 = \sum_{k=1}^N (\sigma_1 T_k \sigma_3 + \sigma_3 \sigma_1 T_k).
\end{equation}
Hence, equation (\ref{T-1}) is satisfied.

In order to show the validity of equation (\ref{T-2}), we differentiate (\ref{T-25}) in $x$
and substitute (\ref{T-k}) to obtain
\begin{eqnarray}
\nonumber
(\tilde{u}_{xx}-u_{xx})\sigma_3\sigma_1 & = & 4\sum\limits_{k=1}^N\lambda_k^2(T_k\sigma_3-\sigma_3T_k)+
2\tilde{u}\sum\limits_{k=1}^N\lambda_k(\sigma_1T_k\sigma_3-\sigma_3\sigma_1T_k) \\
\nonumber
&& +2u\sum\limits_{k=1}^N\lambda_k(\sigma_3T_k\sigma_1+T_k\sigma_3\sigma_1)+\tilde{u}_x\sum\limits_{k=1}^N(\sigma_3\sigma_1T_k\sigma_3-\sigma_1T_k)\\
\nonumber
&& +u_x\sum\limits_{k=1}^N(T_k\sigma_1+\sigma_3T_k\sigma_3\sigma_1)+(\tilde{u}^2+u^2)\sum\limits_{k=1}^N(\sigma_3T_k-T_k\sigma_3)\\
\label{T-21}
&& +2u\tilde{u}\sum\limits_{k=1}^N(\sigma_3\sigma_1T_k\sigma_1+\sigma_1T_k\sigma_3\sigma_1).
\end{eqnarray}
Substituting (\ref{T-0}) and (\ref{T-21}) into (\ref{T-2}) yields a simplified form of the equation:
\begin{eqnarray}
\nonumber
& \phantom{t} & 2\widetilde{u}\sum\limits_{k=1}^N\lambda_k(\sigma_1T_k\sigma_3+\sigma_3\sigma_1T_k)+
2u\sum\limits_{k=1}^N\lambda_k(\sigma_3T_k\sigma_1-T_k\sigma_3\sigma_1) \\
\nonumber
& \phantom{t} & + \widetilde{u}_x\sum\limits_{k=1}^N(\sigma_3\sigma_1T_k\sigma_3+\sigma_1T_k)
+u_x\sum\limits_{k=1}^N(\sigma_3T_k\sigma_3\sigma_1-T_k\sigma_1) +(\widetilde{u}^2-u^2) \sum\limits_{k=1}^N(T_k\sigma_3+\sigma_3T_k)\\
& \phantom{t} &
+2 \widetilde{u} u \sum\limits_{k=1}^N (\sigma_3 \sigma_1 T_k \sigma_1 + \sigma_1 T_k \sigma_3 \sigma_1 + T_k \sigma_3 - \sigma_3 T_k) = 0.
\label{TT-1}
\end{eqnarray}
The last term in the left-hand side of (\ref{TT-1}) is identically zero thanks to equation (\ref{T-100})
after multiplication by $\sigma_1$ on the right. Multiplication of equation (\ref{T-100}) by $\sigma_3$ on the right
allows us to group the terms containing $u_x$ and $\widetilde{u}_x$. As a result,
we rewrite (\ref{TT-1}) in the equivalent form
\begin{eqnarray}
\nonumber
& \phantom{t} &
2\widetilde{u}\sum\limits_{k=1}^N\lambda_k(\sigma_1T_k\sigma_3+\sigma_3\sigma_1T_k)+2u\sum\limits_{k=1}^N\lambda_k(\sigma_3T_k\sigma_1-T_k\sigma_3\sigma_1) \\
& \phantom{t} & + (u_x-\widetilde{u}_x) \sum\limits_{k=1}^N(\sigma_3T_k\sigma_3\sigma_1-T_k\sigma_1)
+(\widetilde{u}^2-u^2) \sum\limits_{k=1}^N(T_k\sigma_3+\sigma_3T_k) = 0. \label{TT-2}
\end{eqnarray}
Multiplying (\ref{T-25}) by $\sigma_1$ from the left and from the right, we obtain
\begin{eqnarray*}
(\widetilde{u}_x-u_x) I = 2\sum\limits_{k=1}^N\lambda_k ( \sigma_1 T_k\sigma_3 + \sigma_3 \sigma_1 T_k)
+ \widetilde{u}\sum\limits_{k=1}^N (T_k \sigma_3 + \sigma_3 T_k)
+ u \sum\limits_{k=1}^N (\sigma_1 \sigma_3T_k\sigma_1 + \sigma_1 T_k\sigma_3\sigma_1)
\end{eqnarray*}
and
\begin{eqnarray*}
(\widetilde{u}_x-u_x) I =  2\sum\limits_{k=1}^N\lambda_k ( T_k \sigma_3 \sigma_1 - \sigma_3 T_k \sigma_1 )
+ \widetilde{u}\sum\limits_{k=1}^N(\sigma_1 T_k \sigma_3 \sigma_1 -\sigma_3\sigma_1T_k \sigma_1 )
+ u \sum\limits_{k=1}^N(\sigma_3T_k+T_k\sigma_3),
\end{eqnarray*}
from which one can rewrite (\ref{TT-2}) in the equivalent form
$$
(\widetilde{u}_x - u_x) (\widetilde{u} - u) I - (\widetilde{u}_x - u_x) \sum\limits_{k=1}^N(\sigma_3T_k\sigma_3\sigma_1-T_k\sigma_1) = 0,
$$
which is satisfied thanks to equation (\ref{TTT}). Hence, equation (\ref{T-2}) is satisfied.

Finally, we show that if $\{ \varphi^{(k)} \}_{1 \leq k \leq N}$ solve (\ref{3.3}) with
$\{\lambda_k\}_{1 \leq k \leq N}$ and $u$ and $\{ \widetilde{\varphi}^{(k)} \}_{1 \leq k \leq N}$ are obtained from
the linear algebraic system (\ref{lin-system}), then
$\{ \widetilde{\varphi}^{(k)} \}_{1 \leq k \leq N}$ solve (\ref{3.3})  with
$\{\lambda_k\}_{1 \leq k \leq N}$  and $\widetilde{u}$. We note from the linear system (\ref{3.3})
that
\begin{eqnarray}
\nonumber
\partial_t \langle \varphi^{(j)}, \varphi^{(k)} \rangle & = & -(\lambda_j + \lambda_k)
\left[ 4 (\lambda_j^2 - \lambda_j \lambda_k + \lambda_k^2) + 2 u^2 \right]
\langle \varphi^{(j)}, \sigma_3 \varphi^{(k)} \rangle \\
\label{t-derivative-eq}
& \phantom{t} &
+ 4 (\lambda_j^2 - \lambda_k^2) u \langle \varphi^{(j)}, \sigma_3 \sigma_1 \varphi^{(k)} \rangle
- 2 (\lambda_j + \lambda_k) u_x \langle \varphi^{(j)}, \sigma_1 \varphi^{(k)} \rangle.
\end{eqnarray}
Differentiating (\ref{lin-system}) in $t$ and substituting (\ref{operator-V}) and (\ref{t-derivative-eq}) yield
{\small \begin{eqnarray}
\nonumber
& \phantom{t} & \sum_{k=1}^N \frac{\langle \varphi^{(j)}, \varphi^{(k)} \rangle}{\lambda_j + \lambda_k} \left[ \partial_t
\widetilde{\varphi}^{(k)} + (4 \lambda_k^3 + 2 \lambda_k \widetilde{u}^2) \sigma_3 \widetilde{\varphi}^{(k)}
+ 4 \lambda_k^2 \widetilde{u} \sigma_3 \sigma_1 \widetilde{\varphi}^{(k)}
+ 2 \lambda_k \widetilde{u}_x \sigma_1 \widetilde{\varphi}^{(k)}
+ (2 \widetilde{u}^3 + \widetilde{u}_{xx}) \sigma_3 \sigma_1 \widetilde{\varphi}^{(k)}\right]  \\
\nonumber
& = & \sum_{k=1}^N \left[ (4 \lambda_j^2 - 4 \lambda_j \lambda_k + 4 \lambda_k^2 + 2 u^2 )
\langle \varphi^{(j)}, \sigma_3 \varphi^{(k)} \rangle
+ 4 (\lambda_k - \lambda_j) u \langle \varphi^{(j)}, \sigma_3 \sigma_1 \varphi^{(k)} \rangle
+ 2 u_x \langle \varphi^{(j)}, \sigma_1 \varphi^{(k)} \rangle \right] \widetilde{\varphi}^{(k)}  \\
\nonumber
& + &  \sum_{k=1}^N \langle \varphi^{(j)}, \varphi^{(k)} \rangle \left[
(4 \lambda_k^2 - 4 \lambda_k \lambda_j + 4 \lambda_j^2 + 2 \widetilde{u}^2) \sigma_3 \widetilde{\varphi}^{(k)}
+ 4 (\lambda_k - \lambda_j) \widetilde{u} \sigma_3 \sigma_1 \widetilde{\varphi}^{(k)}
+ 2 \widetilde{u}_x \sigma_1 \widetilde{\varphi}^{(k)}\right] \\
& + &  2 \lambda_j (u^2 - \widetilde{u}^2) \sigma_1 \varphi^{(j)} + 4 \lambda_j^2 (u-\widetilde{u}) \varphi^{(j)}
- 2 \lambda_j (u_x - \widetilde{u}_x) \sigma_3 \varphi^{(j)}  + (2 u^3 + u_{xx} - 2 \widetilde{u}^3 - \widetilde{u}_{xx}) \varphi^{(j)}.
\label{t-equation}
\end{eqnarray}}
The terms proportional to $4 \lambda_j^2$ cancel out due to the same relation (\ref{x-relation-cancel}).
The terms proportional to $2 \lambda_j$ cancel out if the following relation is true:
\begin{eqnarray}
\nonumber
& \phantom{t} & (u^2 - \widetilde{u}^2) \sigma_1 \varphi^{(j)} + (\widetilde{u}_x - u_x) \sigma_3 \varphi^{(j)} =
2 \sum_{k=1}^N \lambda_k \left[ \langle \varphi^{(j)}, \sigma_3 \varphi^{(k)} \rangle \widetilde{\varphi}^{(k)}
+ \langle \varphi^{(j)}, \varphi^{(k)} \rangle \sigma_3 \widetilde{\varphi}^{(k)} \right] \\
\label{eq1-toshow}
& \phantom{t} &
+ 2 \widetilde{u} \sum_{k=1}^N \langle \varphi^{(j)}, \varphi^{(k)} \rangle
\sigma_3 \sigma_1 \widetilde{\varphi}^{(k)} + 2 u \sum_{k=1}^N \langle \varphi^{(j)}, \sigma_3 \sigma_1 \varphi^{(k)} \rangle \widetilde{\varphi}^{(k)}.
\end{eqnarray}
The other $\lambda_j$-independent terms cancel out if the following relation is true:
\begin{eqnarray}
\nonumber
& \phantom{t} & (2 \widetilde{u}^3 + \widetilde{u}_{xx} - 2 u^3 - u_{xx}) \varphi^{(j)}  =
4 \sum_{k=1}^N \lambda_k^2 \left[ \langle \varphi^{(j)}, \sigma_3 \varphi^{(k)} \rangle \widetilde{\varphi}^{(k)}
+ \langle \varphi^{(j)}, \varphi^{(k)} \rangle \sigma_3 \widetilde{\varphi}^{(k)} \right] \\
\nonumber & \phantom{t} &
+ 2 \widetilde{u}^2 \sum_{k=1}^N \langle \varphi^{(j)}, \varphi^{(k)} \rangle
\sigma_3 \widetilde{\varphi}^{(k)} + 2 u^2 \sum_{k=1}^N \langle \varphi^{(j)}, \sigma_3 \varphi^{(k)} \rangle \widetilde{\varphi}^{(k)} \\
\nonumber & \phantom{t} &
+ 4 \widetilde{u} \sum_{k=1}^N \lambda_k \langle \varphi^{(j)}, \varphi^{(k)} \rangle
\sigma_3 \sigma_1 \widetilde{\varphi}^{(k)} + 4 u \sum_{k=1}^N \lambda_k
\langle \varphi^{(j)}, \sigma_3 \sigma_1 \varphi^{(k)} \rangle \widetilde{\varphi}^{(k)} \\
& \phantom{t} &
+ 2 \widetilde{u}_x \sum_{k=1}^N \langle \varphi^{(j)}, \varphi^{(k)} \rangle
\sigma_1 \widetilde{\varphi}^{(k)} + 2 u_x \sum_{k=1}^N \langle \varphi^{(j)}, \sigma_1 \varphi^{(k)} \rangle \widetilde{\varphi}^{(k)}.
\label{eq2-toshow}
\end{eqnarray}
Provided equations (\ref{eq1-toshow}) and (\ref{eq2-toshow}) are satisfied,
the right-hand side of equation (\ref{t-equation}) is zero. If the linear
system (\ref{lin-system}) is assumed to admit a unique solution, then
$\widetilde{\varphi}^{(k)}$ solves (\ref{3.3})  with $\{\lambda_k\}_{1 \leq k \leq N}$  and $\widetilde{u}$.

Finally, we show validity of equations (\ref{eq1-toshow}) and (\ref{eq2-toshow}). In order to show (\ref{eq1-toshow}),
we first obtain the relation
\begin{equation}
\label{xx-derivative-eq}
\partial_x \langle \varphi^{(j)}, \sigma_3 \varphi^{(k)} \rangle =
(\lambda_j + \lambda_k) \langle \varphi^{(j)}, \varphi^{(k)} \rangle + 2u \langle \varphi^{(j)}, \sigma_1 \varphi^{(k)} \rangle,
\end{equation}
in addition to the relation (\ref{x-derivative-eq}). Then, we differentiate (\ref{x-relation-cancel}) in $x$,
substitute (\ref{operator-U}), (\ref{x-derivative-eq}), and (\ref{xx-derivative-eq}),
and obtain
\begin{eqnarray}
\nonumber
& \phantom{t} & (\widetilde{u}_x-u_x) \varphi^{(j)} + (\widetilde{u}-u) u \sigma_3 \sigma_1 \varphi^{(j)} =
2 \sum_{k=1}^N \lambda_k \left[ \langle \varphi^{(j)}, \sigma_3 \varphi^{(k)} \rangle \sigma_3 \widetilde{\varphi}^{(k)}
+ \langle \varphi^{(j)}, \varphi^{(k)} \rangle \widetilde{\varphi}^{(k)} \right] \\
& \phantom{t} &
+ \widetilde{u} \sum_{k=1}^N \left[ \langle \varphi^{(j)}, \sigma_3 \varphi^{(k)} \rangle \sigma_3 \sigma_1 \widetilde{\varphi}^{(k)}
+ \langle \varphi^{(j)}, \varphi^{(k)} \rangle \sigma_1 \widetilde{\varphi}^{(k)} \right] +
2 u \sum_{k=1}^N \langle \varphi^{(j)}, \sigma_1 \varphi^{(k)} \rangle \widetilde{\varphi}^{(k)},
\label{xx-relation-cancel}
\end{eqnarray}
where the relation (\ref{x-relation-cancel}) was used to cancel the $\lambda_j$ term.
By using the transformation formulas (\ref{N-fold}), we verify that
\begin{equation}
\label{xx-relation-useful}
(\widetilde{u} - u) \sigma_1 \varphi^{(j)}
=  \sum_{k=1}^N \left[ \langle \varphi^{(j)}, \sigma_1 \varphi^{(k)} \rangle \sigma_3 \widetilde{\varphi}^{(k)} -
 \langle \varphi^{(j)}, \sigma_3 \sigma_1 \varphi^{(k)} \rangle \widetilde{\varphi}^{(k)} \right].
\end{equation}
This allows us to simplify (\ref{xx-relation-cancel}) to the form
\begin{eqnarray}
\nonumber
(\widetilde{u}_x-u_x) \varphi^{(j)} & = &
2 \sum_{k=1}^N \lambda_k \left[ \langle \varphi^{(j)}, \sigma_3 \varphi^{(k)} \rangle \sigma_3 \widetilde{\varphi}^{(k)}
+ \langle \varphi^{(j)}, \varphi^{(k)} \rangle \widetilde{\varphi}^{(k)} \right] \\
\nonumber
& \phantom{t} & + \widetilde{u} \sum_{k=1}^N \left[ \langle \varphi^{(j)}, \sigma_3 \varphi^{(k)} \rangle \sigma_3 \sigma_1 \widetilde{\varphi}^{(k)}
+ \langle \varphi^{(j)}, \varphi^{(k)} \rangle \sigma_1 \widetilde{\varphi}^{(k)} \right] \\
& \phantom{t} & +
u \sum_{k=1}^N \left[ \langle \varphi^{(j)}, \sigma_1 \varphi^{(k)} \rangle \widetilde{\varphi}^{(k)}
+  \langle \varphi^{(j)}, \sigma_3 \sigma_1 \varphi^{(k)} \rangle \sigma_3 \widetilde{\varphi}^{(k)} \right].
\label{xxx-relation-cancel}
\end{eqnarray}
Substituting (\ref{xxx-relation-cancel})
to (\ref{eq1-toshow}) yields the following equation
\begin{eqnarray}
\nonumber
(u^2 - \widetilde{u}^2) \sigma_1 \varphi^{(j)} & = &
\widetilde{u} \sum_{k=1}^N \left[ \langle \varphi^{(j)}, \varphi^{(k)} \rangle \sigma_3 \sigma_1 \widetilde{\varphi}^{(k)}
-\langle \varphi^{(j)}, \sigma_3 \varphi^{(k)} \rangle \sigma_1 \widetilde{\varphi}^{(k)} \right] \\
& \phantom{t} &
+ u \sum_{k=1}^N \left[  \langle \varphi^{(j)}, \sigma_3 \sigma_1 \varphi^{(k)} \rangle \widetilde{\varphi}^{(k)}
-\langle \varphi^{(j)}, \sigma_1 \varphi^{(k)} \rangle \sigma_3 \widetilde{\varphi}^{(k)} \right].
 \label{eq11-toshow}
\end{eqnarray}
Thanks to the relations (\ref{x-relation-cancel}) and (\ref{xx-relation-useful}),
equation (\ref{eq11-toshow}) is satisfied, and so is equation (\ref{eq1-toshow}).

In order to show (\ref{eq2-toshow}), we first obtain the relations
\begin{equation}
\label{xx-derivative-eq1}
\partial_x \langle \varphi^{(j)}, \sigma_1 \varphi^{(k)} \rangle =
(\lambda_j - \lambda_k) \langle \varphi^{(j)}, \sigma_3 \sigma_1 \varphi^{(k)} \rangle - 2u \langle \varphi^{(j)}, \sigma_3 \varphi^{(k)} \rangle
\end{equation}
and
\begin{equation}
\label{xx-derivative-eq2}
\partial_x \langle \varphi^{(j)}, \sigma_3 \sigma_1 \varphi^{(k)} \rangle =
(\lambda_j - \lambda_k) \langle \varphi^{(j)}, \sigma_1 \varphi^{(k)} \rangle.
\end{equation}
Then, we differentiate (\ref{xxx-relation-cancel}) in $x$, substitute (\ref{operator-U}),
(\ref{x-derivative-eq}), (\ref{xx-derivative-eq}), (\ref{xx-derivative-eq1}), and (\ref{xx-derivative-eq2}),
and obtain
\begin{eqnarray}
\nonumber
& \phantom{t} & (\widetilde{u}_{xx} - u_{xx}) \varphi^{(j)} + u (\widetilde{u}_x - u_x) \sigma_3 \sigma_1 \varphi^{(j)} =
4 \sum_{k=1}^N \lambda_k^2 \left[ \langle \varphi^{(j)}, \sigma_3 \varphi^{(k)} \rangle \widetilde{\varphi}^{(k)}
+ \langle \varphi^{(j)}, \varphi^{(k)} \rangle \sigma_3  \widetilde{\varphi}^{(k)} \right] \\
\nonumber
& \phantom{t} & + 2 \widetilde{u} \sum_{k=1}^N \lambda_k
\left[ \langle \varphi^{(j)}, \sigma_3 \varphi^{(k)} \rangle \sigma_1 \widetilde{\varphi}^{(k)}
+ \langle \varphi^{(j)}, \varphi^{(k)} \rangle \sigma_3 \sigma_1 \widetilde{\varphi}^{(k)} \right]
+ 4 u \sum_{k=1}^N \lambda_k \langle \varphi^{(j)}, \sigma_1 \varphi^{(k)} \rangle \sigma_3 \widetilde{\varphi}^{(k)}  \\
\nonumber
& \phantom{t} & + \widetilde{u}_x \sum_{k=1}^N \left[ \langle \varphi^{(j)}, \sigma_3 \varphi^{(k)} \rangle \sigma_3 \sigma_1 \widetilde{\varphi}^{(k)}
+ \langle \varphi^{(j)}, \varphi^{(k)} \rangle \sigma_1 \widetilde{\varphi}^{(k)} \right] \\
\nonumber & \phantom{t} & +
u_x \sum_{k=1}^N \left[ \langle \varphi^{(j)}, \sigma_1 \varphi^{(k)} \rangle \widetilde{\varphi}^{(k)}
+  \langle \varphi^{(j)}, \sigma_3 \sigma_1 \varphi^{(k)} \rangle \sigma_3 \widetilde{\varphi}^{(k)} \right] \\
\nonumber & \phantom{t} &
+ \widetilde{u} u \sum_{k=1}^N \left[ 3 \langle \varphi^{(j)}, \sigma_1 \varphi^{(k)} \rangle \sigma_3 \sigma_1 \widetilde{\varphi}^{(k)}
+  \langle \varphi^{(j)}, \sigma_3 \sigma_1 \varphi^{(k)} \rangle \sigma_1 \widetilde{\varphi}^{(k)} \right] \\
& \phantom{t} &
- \widetilde{u}^2 \sum_{k=1}^N \left[ \langle \varphi^{(j)}, \sigma_3 \varphi^{(k)} \rangle \widetilde{\varphi}^{(k)}
+  \langle \varphi^{(j)}, \varphi^{(k)} \rangle \sigma_3 \widetilde{\varphi}^{(k)} \right]
- 2 u^2 \sum_{k=1}^N \langle \varphi^{(j)}, \sigma_3 \varphi^{(k)} \rangle \widetilde{\varphi}^{(k)},
\label{xxxx-relation-cancel}
\end{eqnarray}
where the relation (\ref{xxx-relation-cancel}) was used to cancel the $\lambda_j$ term. Substituting (\ref{xxxx-relation-cancel})
into (\ref{eq2-toshow}) and using (\ref{x-relation-cancel}) and (\ref{xx-relation-useful}) yield
\begin{eqnarray}
\nonumber
& \phantom{t} & 2 (\widetilde{u}^3 - u^3) \varphi^{(j)}  =
(2u - \widetilde{u}) (\widetilde{u}_x - u_x) \sigma_3 \sigma_1 \varphi^{(j)}
+ \widetilde{u}^2 \sum_{k=1}^N \left[ 3 \langle \varphi^{(j)}, \varphi^{(k)} \rangle \sigma_3 \widetilde{\varphi}^{(k)}
+  \langle \varphi^{(j)}, \sigma_3 \varphi^{(k)} \rangle \widetilde{\varphi}^{(k)}  \right] \\
\nonumber & \phantom{t} &
+ 4 u^2 \sum_{k=1}^N \langle \varphi^{(j)}, \sigma_3 \varphi^{(k)} \rangle \widetilde{\varphi}^{(k)}
- \widetilde{u} u \sum_{k=1}^N \left[ 3 \langle \varphi^{(j)}, \sigma_1 \varphi^{(k)} \rangle \sigma_3 \sigma_1 \widetilde{\varphi}^{(k)}
+  \langle \varphi^{(j)}, \sigma_3 \sigma_1 \varphi^{(k)} \rangle \sigma_1 \widetilde{\varphi}^{(k)} \right] \\
\nonumber & \phantom{t} &
+ 2 \widetilde{u} \sum_{k=1}^N \lambda_k \left[ \langle \varphi^{(j)}, \varphi^{(k)} \rangle
\sigma_3 \sigma_1 \widetilde{\varphi}^{(k)} - \langle \varphi^{(j)}, \sigma_3 \varphi^{(k)} \rangle \sigma_1 \widetilde{\varphi}^{(k)} \right]\\
& \phantom{t} &
+ 4 u \sum_{k=1}^N \lambda_k
\left[ \langle \varphi^{(j)}, \sigma_3 \sigma_1 \varphi^{(k)} \rangle \widetilde{\varphi}^{(k)}
- \langle \varphi^{(j)}, \sigma_1 \varphi^{(k)} \rangle \sigma_3 \widetilde{\varphi}^{(k)} \right].
\label{eq2-toshow-again}
\end{eqnarray}
Substituting (\ref{xxx-relation-cancel}) to (\ref{eq2-toshow-again}) yields
\begin{eqnarray}
\nonumber
& \phantom{t} & 2 (\widetilde{u} - u) (\widetilde{u}^2 + \widetilde{u} u + u^2) \varphi^{(j)}  =
2 \widetilde{u}^2 \sum_{k=1}^N \left[ \langle \varphi^{(j)}, \varphi^{(k)} \rangle \sigma_3 \widetilde{\varphi}^{(k)}
+  \langle \varphi^{(j)}, \sigma_3 \varphi^{(k)} \rangle \widetilde{\varphi}^{(k)}  \right] \\
\nonumber & \phantom{t} &
+ 2 \widetilde{u} u \sum_{k=1}^N \left[ \langle \varphi^{(j)}, \varphi^{(k)} \rangle \sigma_3 \widetilde{\varphi}^{(k)}
- \langle \varphi^{(j)}, \sigma_3 \varphi^{(k)} \rangle \widetilde{\varphi}^{(k)}
- 2 \langle \varphi^{(j)}, \sigma_1 \varphi^{(k)} \rangle \sigma_3 \sigma_1 \widetilde{\varphi}^{(k)}\right] \\
\nonumber & \phantom{t} &
+ 2 u^2 \sum_{k=1}^N \left[ 2 \langle \varphi^{(j)}, \sigma_3 \varphi^{(k)} \rangle \widetilde{\varphi}^{(k)}
+ \langle \varphi^{(j)}, \sigma_1 \varphi^{(k)} \rangle \sigma_3 \sigma_1 \widetilde{\varphi}^{(k)}
- \langle \varphi^{(j)}, \sigma_3 \sigma_1 \varphi^{(k)} \rangle \sigma_1 \widetilde{\varphi}^{(k)} \right] \\
\nonumber & \phantom{t} &
+ 4 u \sum_{k=1}^N \lambda_k
\left[ \langle \varphi^{(j)}, \sigma_3 \sigma_1 \varphi^{(k)} \rangle \widetilde{\varphi}^{(k)}
+ \langle \varphi^{(j)}, \varphi^{(k)} \rangle \sigma_3 \sigma_1  \widetilde{\varphi}^{(k)} \right. \\
& \phantom{t} & \phantom{texttexttext} \left.
- \langle \varphi^{(j)}, \sigma_1 \varphi^{(k)} \rangle \sigma_3 \widetilde{\varphi}^{(k)}
- \langle \varphi^{(j)}, \sigma_3 \varphi^{(k)} \rangle \sigma_1 \widetilde{\varphi}^{(k)} \right].
\label{eq2-toshow-again-again}
\end{eqnarray}
By using the relations (\ref{symm1}) and explicit computations, we obtain
\begin{eqnarray}
\nonumber & \phantom{t} &
 \sum_{k=1}^N \left[ \langle \varphi^{(j)}, \varphi^{(k)} \rangle \sigma_3 \widetilde{\varphi}^{(k)}
- \langle \varphi^{(j)}, \sigma_3 \varphi^{(k)} \rangle \widetilde{\varphi}^{(k)}
- 2 \langle \varphi^{(j)}, \sigma_1 \varphi^{(k)} \rangle \sigma_3 \sigma_1 \widetilde{\varphi}^{(k)}\right] \\
\label{last-1} & = & (\widetilde{u}-u) \varphi^{(j)} - 2 \left( \sum_{k=1}^N p_k \widetilde{q}_k - \sum_{k=1}^N \widetilde{p}_k q_k \right) \sigma_1 \varphi^{(j)},
\end{eqnarray}
\begin{eqnarray}
\nonumber & \phantom{t} &
\sum_{k=1}^N \left[ 2 \langle \varphi^{(j)}, \sigma_3 \varphi^{(k)} \rangle \widetilde{\varphi}^{(k)}
+ \langle \varphi^{(j)}, \sigma_1 \varphi^{(k)} \rangle \sigma_3 \sigma_1 \widetilde{\varphi}^{(k)}
- \langle \varphi^{(j)}, \sigma_3 \sigma_1 \varphi^{(k)} \rangle \sigma_1 \widetilde{\varphi}^{(k)} \right] \\
\label{last-2} & = & (\widetilde{u}-u) \varphi^{(j)} + 2 \left( \sum_{k=1}^N p_k \widetilde{q}_k - \sum_{k=1}^N \widetilde{p}_k q_k \right) \sigma_1 \varphi^{(j)}
\end{eqnarray}
and
\begin{eqnarray}
\nonumber & \phantom{t} &
\sum_{k=1}^N \lambda_k
\left[ \langle \varphi^{(j)}, \sigma_3 \sigma_1 \varphi^{(k)} \rangle \widetilde{\varphi}^{(k)}
+ \langle \varphi^{(j)}, \varphi^{(k)} \rangle \sigma_3 \sigma_1  \widetilde{\varphi}^{(k)} \right. \\
\nonumber & \phantom{t} & \left. \phantom{texttext}
- \langle \varphi^{(j)}, \sigma_1 \varphi^{(k)} \rangle \sigma_3 \widetilde{\varphi}^{(k)}
- \langle \varphi^{(j)}, \sigma_3 \varphi^{(k)} \rangle \sigma_1 \widetilde{\varphi}^{(k)} \right]\\
\label{last-3} & = &  - 2 \left( \sum_{k=1}^N \lambda_k \widetilde{p}_k p_k  - \sum_{k=1}^N \lambda_k
\widetilde{q}_k q_k \right) \sigma_1 \varphi^{(j)}.
\end{eqnarray}
Substituting (\ref{last-1}), (\ref{last-2}), and (\ref{last-3}) to (\ref{eq2-toshow-again-again})
cancel all terms thanks to the relations (\ref{symm2}) and (\ref{x-relation-cancel}).
Therefore, equation (\ref{eq2-toshow-again-again}) is satisfied, and so is equation (\ref{eq2-toshow}).

\end{document}